\newtheorem{theorem}{\bf Theorem}[section]
\newtheorem{definition}{\bf Definition}[section]
\newtheorem{proposition}{\bf Proposition}[section]
\def\Wr{\mathop{\mathrm{Wr}}\nolimits}
\DeclareMathOperator{\sech}{sech}
\begin{document}

\title{Modulation theory for soliton resonance and Mach reflection}

\author{Samuel J. Ryskamp\thanks{ Department of Applied Mathematics, University 
of Colorado, Boulder, CO (samuel.ryskamp@colorado.edu)} \and Mark A. Hoefer\thanks{ Department of Applied Mathematics, University 
of Colorado, Boulder, CO} \and Gino Biondini\thanks{Department of Mathematics and Department of Physics, State University of New York,  Buffalo, NY}}

\maketitle

\begin{abstract}
Resonant Y-shaped soliton solutions to the Kadomtsev-Petviashvili II (KPII) equation are modelled as shock solutions to an infinite family of modulation conservation laws. The fully two-dimensional soliton modulation equations, valid in the zero dispersion limit of the KPII equation, are demonstrated to reduce to a one-dimensional system. In this same limit, the rapid transition from the larger Y soliton stem to the two smaller legs limits to a travelling discontinuity. This discontinuity is a multivalued, weak solution satisfying modified Rankine-Hugoniot jump conditions for the one-dimensional modulation equations. These results are applied to analytically describe the dynamics of the Mach reflection problem, V-shaped initial conditions that correspond to a soliton incident upon an inward oblique corner. Modulation theory results show excellent agreement with direct KPII numerical simulation.
\end{abstract}

\section{Introduction}
Mach reflection occurs when a sufficiently large amplitude line soliton or classical shock interacts with a barrier at a sufficiently small angle. A Y-shaped triad is formed consisting of two smaller amplitude solitons or shocks and a larger "Mach" stem  perpendicular to the barrier (see figure~\ref{fig:mach_ref_schematic}).  This phenomenon was first reported experimentally in the context of shallow water solitons impinging on a corner in J. Scott Russell's seminal 1845 paper 1845 \cite{russell_1845}, while later in 1875 Ernst Mach observed his eponymous phenomenon arising through interacting shocks in gas dynamics  \cite{mach_uber_1875,krehl_1991}. In this paper, we present a method for modelling Y-shaped and X-shaped solitons as shock solutions to soliton modulation equations. 

Line solitons are special traveling wave solutions of the Kadomstev-Petviashvili (KP) equation 
\begin{equation}
    \label{eq:kp}
    (u_t+uu_x + \epsilon^2 u_{xxx})_x + u_{yy}=0,
\end{equation} where $\epsilon>0$ is a parameter representing the strength of dispersion. This version \eqref{eq:kp} with $+u_{yy}$ is also known as KPII \cite{kadomtsev_stability_1970},
a completely integrable equation \cite{ablowitz_solitons_1991} that arises in surface water waves \cite{ablowitz_baldwin_2012,ablowitz_evolution_1979}, internal water waves \cite{grimshaw_evolution_1981,grimshaw_internal_2018}, and ion-acoustic waves in plasma \cite{infeld_2001}. It admits a two-parameter family of stable line soliton solutions,
\begin{equation}\label{eq:soli}
        u(x,y,t) =  a\sech^2\left(\sqrt{\frac{a}{12}}\frac{x+qy-ct}{\epsilon}\right), \qquad c=\frac{a}{3}+ q^2 ,
\end{equation} where $a$ is the soliton peak amplitude, $q=\tan{\varphi}$, $\varphi$ being the counterclockwise angle between the soliton and the $y$-axis (see figure~\ref{fig:soli_solns}$(a)$), and $\epsilon$ is proportional to the soliton's width in the direction of propagation. The KP equation \eqref{eq:kp} is a two-dimensional generalization of the Korteweg-deVries (KdV) equation and therefore a natural framework for the study of line solitons; when $q=0$ in \eqref{eq:soli}, the well-known KdV soliton is recovered. 
\par The KP equation \eqref{eq:kp} also admits a three-parameter family of Y soliton solutions that consist of a resonant triad of line solitons stably joined together (see figure~\ref{fig:soli_solns}$(b)$) and represent the long-time asymptotic state for line solitons after Mach reflection. Objects of much study, Y-shaped configurations have been observed in numerics \cite{funakoshi_1980,tanaka_mach_1993,tsuji_oblique_2007,biondini_soliton_2009,kodama_2009,li_mach_2011,kao_numerical_2012,chakravarty_numerics_2017,knowles_yeh_2019}, physical experiments \cite{russell_1845,gilmore_1950,perroud_solitary_1957,henderson_1975,melville_mach_1980,li_mach_2011,tanaka_1993,yeh_2010,kodama_yeh_2016}, and field observations \cite{ablowitz_baldwin_2012,wang_internal_observed,andaman_sea_2017}. KP Y solitons were first described mathematically by John Miles \cite{Miles1977oblique,Miles1977resonant}, whose results were generalized by the discovery of more complex KP exact solutions that contain Y-shaped substructures \cite{hirota_2004,kodama_2004,biondini_2007,biondini_soliton_2009,chakravarty_kodama_2009,kodama_kp_2010}. Numerical studies have also identified approximately resonant Y-shapes arising in other similar equations \cite{chang_mKP_2019,numerics_2001,oikawa_2006,matsuno_1998}. 

\begin{figure}
    \centering
    \includegraphics[scale=.25,trim={0 5.4cm 3cm 3cm}]{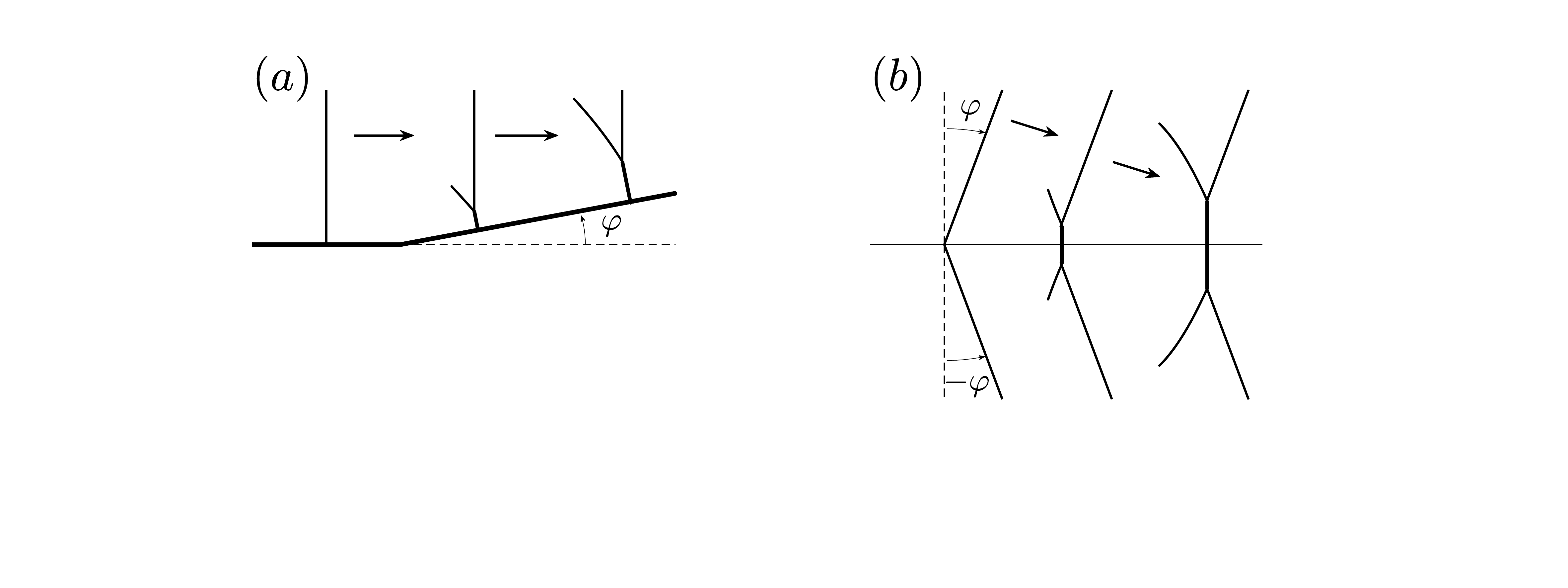}
    \caption{$(a)$ Mach reflection of a soliton or shock impinging on a compressive corner with sufficiently small angle $\varphi$. $(b)$ Mach reflection also occurs for solitons with an initial inward V-shape, which under the KP equation \eqref{eq:kp} is identical to the left figure for the purposes of analysis.}
    \label{fig:mach_ref_schematic}
\end{figure}
\begin{figure}
    \centering
    \includegraphics[scale=.25,trim={0 1.5cm 0 0}]{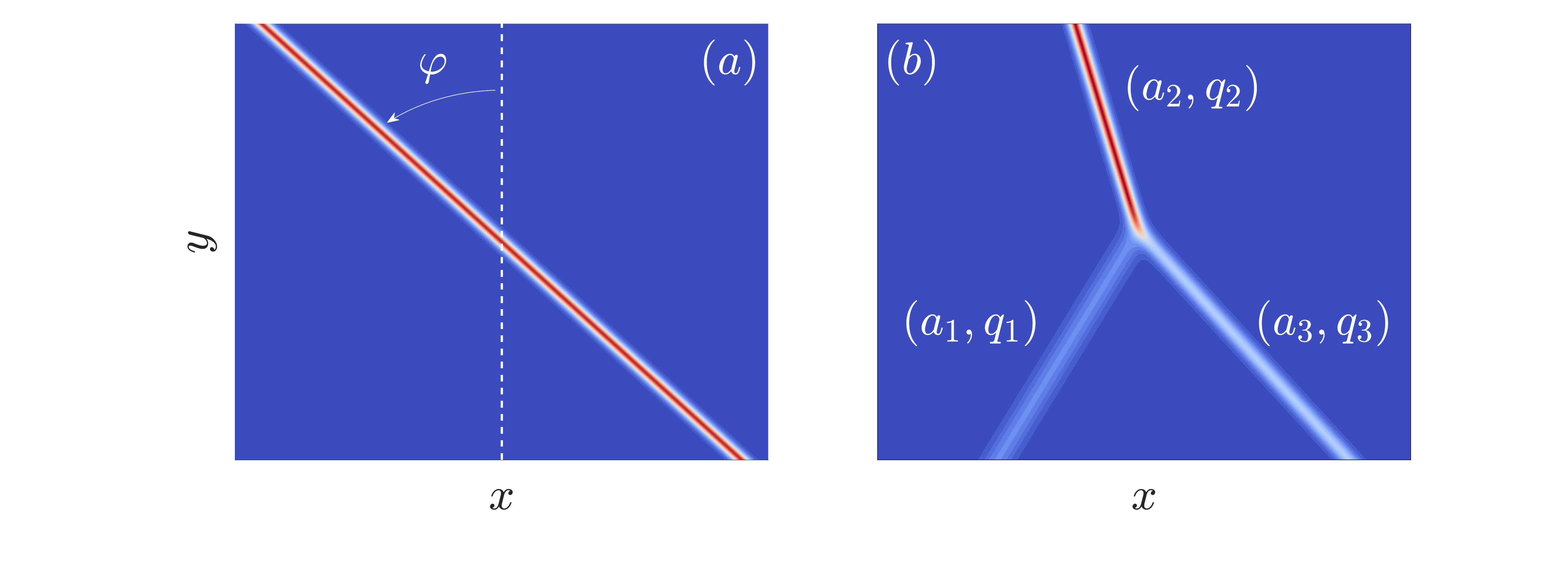}
    \caption{$(a)$ A line soliton exact solution \eqref{eq:soli} to the KP equation \eqref{eq:kp}, where $a$ represents the max amplitude, $q=\tan{\varphi}$, and the width is proportional to $\epsilon$. $(b)$ A resonant Y soliton solution to the KP equation \eqref{eq:kp} with physical variables $(a_i,q_i)$ of the $i^{\rm th}$ soliton labeled. The physical variables are determined by three phase variables according to the conditions \eqref{eq:y_soli_par}. This Y soliton corresponds to the local behaviour of the bottom half of figure~\ref{fig:mach_ref_schematic}$(b)$.}
    \label{fig:soli_solns}
\end{figure}
\par In this paper, we study the Y soliton in the zero dispersion (i.e. $\epsilon \to 0$) limit of the KP equation \eqref{eq:kp}. In this limit, which is equivalent to the limit of long time $t$ and large spatial $(x,y)$ scales for some kinds of data, line solitons \eqref{eq:soli} vanish in width to become localized to a line in the $x$-$y$ plane. When a line soliton is modulated, its parameters $(a,q)$ in the zero dispersion limit evolve according to Whitham modulation equations \cite{whitham,dsw_review_2016,lax_levermore_1983_I,lax_levermore_1983_II,lax_levermore_1983_III}. Recent advances in Whitham modulation theory for the KP equation \eqref{eq:kp} \cite{ablowitz_2017_whitham,ablowitz_2018_whitham,grava_numerical_2018} have proven  to be effective for studying the asymptotic evolution of line solitons interacting due to various initial conditions \cite{biondini2019integrability,ryskamp_2020,ryskamp2020_mf}, some of the first applications of Whitham theory to a governing equation in two spatial and one temporal dimension. 
\par In this work, we obtain two main results by examining the zero dispersion limit of line solitons and Y solitons. The first result is that, in this limit, all continuous line soliton modulations become essentially one-dimensional in the $x$-$y$ plane. We demonstrate that $a$ and $q$ are constrained so that if one knows the parameters' evolution in one spatial dimension, their evolution in the other dimension can be determined. This result will allow us to reduce the (2+1)-dimensional soliton modulation equations to a (1+1)-dimensional system.
\par The second and primary result of this paper is that the KP Y soliton can be interpreted as a shock solution to the soliton modulation equations. In the zero dispersion limit, the three-way intersection of line solitons in the centre of the Y soliton reduces to a single point of discontinuity in the modulation parameters. The velocity of this vertex and relationships between the three solitons' parameters can then be determined using a generalization of Rankine-Hugoniot jump conditions for a family of conservation laws of the modulation equations. In order to account for the multivalued nature of the Y soliton while using (1+1)-dimensional modulation conservation laws,  we introduce \emph{modified Rankine-Hugoniot jump conditions}, where the densities and fluxes of the two smaller legs of the Y soliton are summed. Notably, we show that the family of modulation conservation laws for which the Y soliton vertex satisfies these modified Rankine-Hugoniot jump conditions is \emph{infinite}, contrary to the typical setting of hyperbolic conservation laws. 

\par Our results imply that the Y soliton can be understood as a modified \emph{Whitham shock} \cite{sprenger_whitham_shocks}, a recent discovery defined as a discontinuity in the modulation equations that satisfies Rankine-Hugoniot jump conditions and corresponds to a travelling wave solution to the full equation. Whereas in classical shock theory, a shock is obtained by taking a zero viscosity limit of a travelling wave solution to a dissipative equation, a Whitham shock is obtained by taking the zero dispersion limit of a travelling wave in a dispersive equation.
\par The modulation theory approach features a number of additional novel elements. A physically inspired way to study resonant soliton interactions is by formulating non-soliton initial value problems. The descriptions of the full dynamics of such problems have been limited to numerical simulations, because analytical methods have focused on exact multi-soliton solutions to the KP equation \eqref{eq:kp} \cite{kodama_2009,chakravarty_kodama_2009,yeh_2010,chakravarty_2013,kodama_book,kodama_yeh_2016,chakravarty_numerics_2017}. The modulation approach has the advantage of being able to describe the dynamical evolution of non-soliton initial conditions, not only the long-time, locally steady state. In this paper, we demonstrate this advantage by analytically solving for the evolution of V-shaped initial conditions, i.e. the Mach reflection problem \cite{oikawa_2006,kodama_2009,yeh_2010,kodama_yeh_2016,kodama_book,chakravarty_numerics_2017}. In addition, to the best of our knowledge, this work is the first fully two-dimensional application of Whitham modulation theory, which typically has been restricted to one-dimensional equations or reductions. Consequently, this method holds promise for studying soliton resonance or near resonance in equations where multi-soliton solutions are not known.
\par The remainder of the paper is organized as follows. In section~\ref{sec:miles_res}, we summarize the exact solution forms of the KP line soliton and Y soliton solutions, and we review their zero dispersion limits. Next, in section~\ref{sec:ave_cons_laws}, the derivation of an infinite number of soliton modulation equations in conservation form is presented. In section~\ref{sec:shock_soln}, we show how the Y soliton satisfies (modified) jump conditions for the infinite modulation conservation laws. These results are applied to the classical Mach reflection problem in section~\ref{sec:v_shape}, followed by a discussion in section~\ref{sec:concl}. Appendix~\ref{sec:appendix} contains a proof relevant to our results. In section \ref{sec:v_shape}, our analysis is quantitatively supported by a pseudospectral numerical solver based on \cite{kao_numerical_2012}, the details of which are given in \cite{ryskamp_2020}.
\section{Exact soliton solutions to the KP equation}
\label{sec:miles_res}
\subsection{Exact soliton solutions}
A large class of multi-soliton solutions of the KP equation \eqref{eq:kp}
can be expressed as \cite{kodama_book,Biondini_2006}:
\begin{subequations}
\begin{equation}
u(x,y,t) = 12\, \frac{\partial^2}{\partial x^2}[\log\tau(x,y,t)]\,, 
\label{e:ugeneral}
\end{equation}
where the tau function $\tau(x,y,t)$, 
is given by the Wronskian of exponentials
\label{eq:gen_soln}
\begin{equation}
\tau(x,y,t) = \Wr(f_1,\dots,f_N)\,, \qquad f_n(x,y,t) = \sum_{m=1}^M A_{n,m} e^{\theta_m(x,y,t)}\,,
\end{equation}  
and the phases $\theta_1,\dots,\theta_M$ are 
\begin{equation}
\label{eq:gen_theta}
\theta_m(x,y,t) = \frac{1}{\sqrt{12}\epsilon} \left( k_m x + \frac{k_m^2}{2} y -\frac{k_m^3}{3} t + \theta_{m,0}\right)\,.
\end{equation}
\end{subequations}
The solution \eqref{eq:gen_soln} is uniquely determined by the phase parameters $k_1,\dots,k_M$ and the coefficient matrix $A = (A_{n,m})$,
plus the translation constants $\theta_{1,0},\dots,\theta_{M,0}$.
Without loss of generality, one can take the phase parameters to be strictly ordered so that $k_1<\cdots<k_M$.

Generically,
the above representation produces a solution with exactly $N$ asymptotic line solitons as $y\to\infty$
and $M-N$ asymptotic line solitons as $y\to-\infty$ \cite{Biondini_2006}.
The amplitude and slope of each asymptotic line soliton are completely determined by the
phase parameters $k_1,\dots,k_M$.
In the simplest nontrivial case, obtained when $N =1$ and $M=2$, one recovers the soliton solution~\eqref{eq:soli}, shown in figure~\ref{fig:soli_solns}$(a)$.
It is convenient to label the two phase parameters as $k_-$, $k_+$, $k_- < k_+$ in this case.
The amplitude and slope parameters $a$ and $q$ are given by 
\begin{equation}
a = \frac{1}{4} (k_+ - k_-)^2\,,\qquad
q = \frac{1}{2}(k_+ + k_-)\,.
\label{e:directmap}
\end{equation}
The inverse map to \eqref{e:directmap} is
\begin{equation}
k_- =  q - \sqrt{a}\,,\qquad
k_+ = q + \sqrt{a}\,.
\label{e:inversemap}
\end{equation}
\par The next simplest case is obtained when $N=1$ and $M=3$ and represents the resonant soliton solution \cite{Biondini_2003}, shown in figure~\ref{fig:soli_solns}$(b)$. As $y \to -\infty$ there are two solitons with phase parameters $(k_1,k_2)$ and $(k_2,k_3)$, respectively. Conversely, as $y \to \infty$ one observes only one soliton specified by the phase parameters $(k_1,k_3)$. Consequently, the Y soliton is determined (up to translation constants) by three phase parameters $k_1$, $k_2$, and $k_3$. For leg $j$ of the Y soliton, $j \in \{1,2,3\}$, counting clockwise from the bottom left (see figure~\ref{fig:soli_solns}$(b)$), the physical variables $(a_i,q_i)$ are determined by the phase parameters as
\begin{align}
    \label{eq:y_soli_par}
    \begin{split}
\begin{aligned}
    & q_1 = \frac{1}{2}(k_2+k_1),  \qquad && q_2 = \frac{1}{2}(k_3+k_1), \qquad && q_3 = \frac{1}{2}(k_3+k_2),\\
    & a_1 = \frac{1}{4}(k_2-k_1)^2, \qquad 
     \qquad && a_2 =\frac{1}{4}(k_3-k_1)^2,\qquad
     \qquad && a_3 =\frac{1}{4}(k_3-k_2)^2.
     \end{aligned}\end{split}
\end{align}Due to the ordering $k_1<k_2<k_3$, one can see that the top leg (leg 2, also known as the Mach stem) is also the largest amplitude soliton. The velocity $(s_x,s_y)$ of the Y soliton vertex is obtained by assuming a traveling wave solution satisfying $\theta_1=\theta_2=\theta_3$ \eqref{eq:gen_theta}, which yields
\begin{equation}
    \label{eq:exact_spd}
    (s_x,s_y) = \left(-\frac{1}{3}(k_1 k_2 +k_1 k_3 + k_2 k_3), \frac{2}{3}(k_1+k_2+k_3)  \right).
\end{equation}
Note that by applying the symmetry transformation $y\to-y$, one can apply the same analysis to a Y soliton with two legs as $y \to \infty$ and one leg as $y \to -\infty$. The expressions for the velocities \eqref{eq:exact_spd} remain unchanged. Although the Y soliton is a special exact solution to the KP equation \eqref{eq:kp}, Y-shaped interactions are ubiquitous as building blocks for higher-order solutions to the KP equation \cite{kodama_book}. 
 \par Since the KP equation remains unchanged under the pseudorotation transformation of the form \cite{ablowitz_2017_whitham,finkel_segur_1985},
 \begin{equation}
 \label{eq:pseudorotate}
  q' = Q + q, \qquad x' = x + Qy - Q^2 t, \qquad y' = y-2Qt, \qquad  Q \in \mathds{R},
 \end{equation}
 one can determine $s_x$ in terms of the remaining Y soliton parameters. For a general Y soliton with parameters \eqref{eq:y_soli_par}, we consider the rotated image of the Y soliton with parameters denoted by $'$ and a vertical stem $q_2'  =0$. The vertical velocity of the Y soliton image from \eqref{eq:exact_spd} is $s_y'=2 k_2'/3$ and its horizontal velocity must be the velocity of the stem $s_x'=a_2/3$ (see \eqref{eq:soli}). If we now rotate the image Y soliton to the initial Y soliton via a rotation of $Q=q_2$, we can use the symmetries \eqref{eq:pseudorotate} to obtain the horizontal velocity that is equivalent to \eqref{eq:exact_spd},
 \begin{equation}
\label{eq:rotate_spd}
s_x = \frac{a_2}{3}+q_2^2 - q_2 s_y . 
 \end{equation}

 \subsection{Zero dispersion limit of KP solitons}
 We will study the Y soliton in the zero dispersion ($\epsilon \to 0$) limit. The equation for the potential $\phi$ of a single line soliton \eqref{eq:soli} where $u=\epsilon \phi_x$ is  \begin{equation}
    \label{eq:potential}
        \phi(x,y,t) = \sqrt{12}q+\sqrt{12a}\, \tanh\left(\sqrt{\frac{a}{12}}\frac{x+qy-ct}{\epsilon}\right), \qquad c = \frac{a}{3}+q^2,
\end{equation}
where the first term is a constant of integration that is well-defined by using the tau function form \eqref{eq:gen_soln}. The zero dispersion parameter $\epsilon$ represents the transition width of the $\tanh$ function, which corresponds to the width of the soliton \eqref{eq:soli}. The limit of the potential \eqref{eq:potential} as $\epsilon \to 0$  becomes a discontinuous step,
\begin{equation}
    \label{eq:heaviside}
   \lim_{\epsilon \to 0} \phi(x,y,t) =  \sqrt{12}\left[(q-\sqrt{a})+2\sqrt{a}\, H\left(x+qy-ct\right)\right],
\end{equation} where $H$ is the Heaviside (step) function. Thus, the corresponding soliton solution to the KP equation in the zero dispersion limit, which is the scaled $x$-derivative of \eqref{eq:heaviside}, is a zero-width impulse with parameters $a$ and $q$, where $a$ is the impulse strength and $q$ is a measure of the slope of the line in the $x$-$y$ plane. In other words, the soliton is localized along the line $x+q y = c t$. This interpretation of a line soliton is similar to what has been called a \emph{soliton graph} \cite{kodama_book}, with the zero dispersion limit corresponding to the so-called ``tropical" limit.
\par Consequently, in the zero dispersion limit, the three line solitons which compose a Y soliton are localized to three semi-infinite lines with a single point of intersection. This intersection point is a travelling discontinuity between all three zero-width, constant line soliton impulses. It is the behaviour of this point that we will interpret with Whitham modulation theory. 

\section{Soliton modulation theory}
\label{sec:ave_cons_laws}
\subsection{Derivation of one-dimensional soliton modulation equations}
\par In this section, we will derive one-dimensional modulation conservation laws for the parameters $(a,q)$ of the line soliton solution \eqref{eq:soli}. These equations describe the parameters' evolution in the zero dispersion limit where the line soliton reduces to the scaled derivative of \eqref{eq:heaviside}, an impulse along a curve in the $x$-$y$ plane. Fully two-dimensional soliton modulation equations have been derived in prior work as \cite{biondini2019integrability,grava_numerical_2018}
\begin{equation}
  \label{eq:full_mod_syst}
  \begin{bmatrix}
    a \\ q
  \end{bmatrix}_t +
  \begin{bmatrix}
    \frac{1}{3}a - q^2 & -\frac{4}{3} a q \\
    -\frac{1}{3}q & \frac{1}{3}a - q^2
  \end{bmatrix}
  \begin{bmatrix}
    a \\ q
  \end{bmatrix}_x + 
  \begin{bmatrix}
    2q & \frac{4}{3}a \\
    \frac{1}{3} & 2q
  \end{bmatrix}
  \begin{bmatrix}
    a \\ q
  \end{bmatrix}_y
  = 0 .
\end{equation} We will use the following theorem, proved in Appendix~\ref{sec:appendix}, to justify considering a one-dimensional reduction of \eqref{eq:full_mod_syst}.
\begin{theorem}
\label{def:thm}
If $(a_x,q_x)$ and $(a_y,q_y)$ are well-defined and continuous, the modulation equations \eqref{eq:full_mod_syst} for the parameters $(a,q)$ of a KP line soliton \eqref{eq:soli} can always be reduced to an equivalent (1+1)-dimensional system of equations.
\end{theorem}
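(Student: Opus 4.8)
The plan is to exploit the fact that the two free parameters of a line soliton are, for modulation purposes, more naturally carried by the two phase parameters $k_\pm=q\pm\sqrt a$ of the inverse map \eqref{e:inversemap}, for which \eqref{eq:full_mod_syst} decouples and the soliton's own phase structure supplies exactly the constraints needed to trade the $y$-variable for $x$ and $t$. In the tau-function representation the single soliton carries the two exponential phases $\theta_\pm$ of \eqref{eq:gen_theta}; for a slowly modulated soliton these become WKB phases $\phi_\pm(x,y,t)$ whose gradients are pinned by \eqref{eq:gen_theta}, namely (after an inessential rescaling) $(\partial_x\phi_\pm,\partial_y\phi_\pm,\partial_t\phi_\pm)=(k_\pm,\tfrac12 k_\pm^2,-\tfrac13 k_\pm^3)$ with $k_\pm=k_\pm(x,y,t)$. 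A convenient first step is to record the algebraic identity $A=(\tfrac13 a+q^2)I-qB$ relating the $x$- and $y$-coefficient matrices in \eqref{eq:full_mod_syst}, so that $A$ and $B$ commute, $k_\pm$ are their common Riemann invariants, and \eqref{eq:full_mod_syst} takes the diagonal form $\partial_t k_\pm+\alpha_\pm\,\partial_x k_\pm+\beta_\pm\,\partial_y k_\pm=0$ with $\alpha_\pm+k_\pm\beta_\pm=k_\pm^2$.

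Next I would use the hypothesis that $(a_x,q_x),(a_y,q_y)$ are continuous to equate the mixed second partials of $\phi_\pm$. Equality of $\partial_x\partial_y\phi_\pm$ and $\partial_y\partial_x\phi_\pm$ yields the constraints
\[
\partial_y k_\pm=k_\pm\,\partial_x k_\pm\qquad\Longleftrightarrow\qquad a_y=q\,a_x+2a\,q_x,\qquad q_y=q\,q_x+\tfrac12 a_x,
\]
while equality of $\partial_x\partial_t\phi_\pm$ and $\partial_t\partial_x\phi_\pm$ gives $\partial_t k_\pm=-k_\pm^2\partial_x k_\pm$ (the $y$--$t$ compatibility is then automatic). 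Substituting the expressions for $a_y$ and $q_y$ into \eqref{eq:full_mod_syst} eliminates all $y$-derivatives and closes the system in $(x,t)$; I expect it to reduce to
\[
a_t+(a+q^2)a_x+4aq\,q_x=0,\qquad q_t+(a+q^2)q_x+q\,a_x=0,
\]
equivalently the decoupled pair $\partial_t k_\pm+k_\pm^2\,\partial_x k_\pm=0$. With the identity above this collapse is immediate, since the constraint converts $\alpha_\pm\partial_x+\beta_\pm\partial_y$ acting on $k_\pm$ into $(\alpha_\pm+k_\pm\beta_\pm)\partial_x=k_\pm^2\partial_x$. For the converse direction needed for ``equivalent'', given a solution of the $(1{+}1)$ system on $(x,t)$ I would reconstruct the $y$-dependence by integrating $\partial_y k_\pm=k_\pm\partial_x k_\pm$; this is consistent because $\partial_t$ of the constraint vanishes modulo the evolution and the constraint itself (these are precisely the integrability conditions guaranteeing that the phases $\phi_\pm$ exist), and the extended field then solves \eqref{eq:full_mod_syst}.

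The main obstacle is establishing the phase-gradient relations for a genuinely \emph{modulated} (rather than exact) soliton --- that the slow-modulation ansatz really produces phases $\phi_\pm$ with the gradients above --- and then verifying that the resulting constraints are consistent with \eqref{eq:full_mod_syst}, i.e.\ that \eqref{eq:full_mod_syst} preserves the constraint surface $\{\partial_y k_\pm=k_\pm\partial_x k_\pm\}$; this is where most of the effort lies, and the Riemann-invariant form together with $A=(\tfrac13 a+q^2)I-qB$ is what makes it tractable. Secondary points are the algebraic bookkeeping in the substitution step (which combinations of the mixed-partial relations and of \eqref{eq:full_mod_syst} to use, and confirming the reduced system is regular for all $a>0$ --- an intermediate step isolating $q_y$ produces a spurious $1/q$ that cancels; alternatively one may first apply the pseudorotation \eqref{eq:pseudorotate} with a constant $Q\neq0$, under which \eqref{eq:kp}, \eqref{eq:soli} and \eqref{eq:full_mod_syst} are invariant, to arrange $q\not\equiv0$), and making precise that ``equivalent'' refers to the modulation field on a neighbourhood of the soliton support, the only region where $(a,q)$ carry physical meaning.
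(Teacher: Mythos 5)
Your conclusions are correct, but your route is genuinely different from the paper's, and one step needs repair. The paper works with a \emph{single} modulation phase $\theta$ whose gradient is normalized by an arbitrary function $\psi(a,q)$ (see \eqref{eq:gen_theta_derivs}); the two compatibility conditions $\theta_{xy}=\theta_{yx}$ and $\theta_{xt}=\theta_{tx}$, combined with the evolution equations \eqref{eq:full_mod_syst} to eliminate $a_t,q_t$, produce a linear relation $A\bm{v}_y=B\bm{v}_x$ between the spatial gradients, and the proof is completed by showing (via a contradiction argument on an ODE for $\psi$) that $A$ and $B$ cannot both be singular, so one set of spatial derivatives can always be eliminated. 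Your version is the special case $\psi=\sqrt{a}$ (the tau-function normalization of the phase \emph{difference}), for which the relevant matrix is always invertible and the relation is exactly your pair $a_y=qa_x+2aq_x$, $q_y=qq_x+\tfrac12 a_x$; your algebra checks out --- the identity $A=(\tfrac13 a+q^2)I-qB$, the common Riemann invariants $k_\pm$, the relation $\alpha_\pm+k_\pm\beta_\pm=k_\pm^2$, and the reduced system $\partial_t k_\pm+k_\pm^2\partial_x k_\pm=0$ are all consistent with substituting those constraints into \eqref{eq:full_mod_syst}. What your approach buys is an explicit, uniformly valid $(x,t)$ reduction in diagonal form (and a cleaner statement of ``equivalent,'' via preservation of the constraint surface), whereas the paper's buys generality over the choice of phase normalization at the cost of only an existence statement for the reduction.

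The weak point, which you correctly flag, is the two-phase ansatz: for a modulated soliton only the difference $\phi_+-\phi_-$ is a physically determined phase (it is the argument of the $\sech^2$), while the sum is a gauge degree of freedom, so demanding that \emph{both} one-forms $k_\pm\,dx+\tfrac12k_\pm^2\,dy-\tfrac13k_\pm^3\,dt$ be closed imposes one constraint ($q_y=qq_x+\tfrac12a_x$) that is not forced by the modulated-soliton structure alone. This gap is fillable without the second phase: that ``sum'' constraint follows from the $\theta_{xt}$ compatibility of the single physical phase together with the evolution equations \eqref{eq:full_mod_syst} (it is precisely the second row of the paper's system \eqref{eq:6p8} at $\psi=\sqrt{a}$), which is how the paper derives it. With that substitution your argument goes through; note also that the resulting constraints are normalization-dependent (the paper's $\psi=1$ choice instead forces $a_x=q_x=0$), reflecting that only the values of $(a,q)$ on the soliton curve are physical --- consistent with your closing remark about restricting ``equivalence'' to a neighbourhood of the soliton support.
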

Theorem~\ref{def:thm} can be intuitively understood by the fact that since a line soliton is localized to a curve in the zero dispersion limit, modulations in one spatial dimension can be locally mapped to modulations in the other spatial dimension using the line equation $x+qy=ct$. Consequently, in most of this paper, without loss of generality we assume $x$-independence of the modulation parameters. In section~\ref{sec:y_indep}, we will show how to map these results to $y$-independent parameters. As an aside, a natural approach for line soliton modulations would be to use an arc length parametrization, similar to equations obtained for modulations of line dark solitons of the nonlinear Schr\"{o}dinger equation \cite{mironov}.
\par In light of the above discussion, using the dispersion parameter $\epsilon$, we assume the $x$-independent asymptotic expansion
\begin{equation}
    \label{eq:exact_expans}
    u(x,y,t; \epsilon) = u_0(\theta,y,t) + \epsilon u_1(\theta,y,t) + \cdots,
\end{equation}
with the leading order modulated soliton ansatz
\begin{equation}\label{eq:x_indep_soli}
    \begin{split}
        u_0(\theta,y,t; \epsilon) &=  a(y,t)\sech^2\left(\sqrt{\frac{a(y,t)}{12}}\frac{\theta}{\epsilon}\right), \\
        \theta_{x} &= 1,\quad \theta_{y}=q(y,t), \quad \theta_{t} = -\left(\frac{a(y,t)}{3}+q^2(y,t) \right),
    \end{split}
\end{equation}
and where every $u_i$ and their derivatives in \eqref{eq:exact_expans} vanish as $\theta \to \pm \infty$. Due to division by $\epsilon$ in \eqref{eq:x_indep_soli}, $\theta/\epsilon$ is a fast variable compared to $y$ and $t$. We require the compatibility conditions $\theta_{xy}=\theta_{yx}$, $\theta_{xt}=\theta_{tx}$, and $\theta_{yt}=\theta_{ty}$. The first two compatibility conditions are satisfied by the assumption that $a$ and $q$ are $x$-independent, i.e. $a_x \equiv q_x \equiv 0$. The third condition $\theta_{yt}=\theta_{ty}$ yields the equation
\begin{equation}
    \label{eq:11}
\theta_{yt}-\theta_{ty}=q_t + \left(\frac{a}{3}+q^2 \right)_y=0,
\end{equation}
which immediately gives us a modulation equation for $q$ in conservation form. This equation \eqref{eq:11} is known as conservation of waves.
\par We will now average the conservation laws of the KP equation over the fast variable $\theta$ to obtain a number of additional modulation conservation laws for the soliton amplitude \cite{whitham_1965_cons,whitham,dsw_review_2016}. Consider a general conservation law
\begin{equation}
    \label{eq:kp_cons_law_gen}
    \frac{\partial}{\partial t}F +\frac{\partial}{\partial x}G + \frac{\partial}{\partial y}H = 0,
\end{equation}
where $F,G,H$ depend on the derivatives of the potential $\phi$ such that $u=\epsilon \phi_x$. Based on the modulated soliton form \eqref{eq:x_indep_soli}, we can expand the derivatives as
\begin{equation}
    \label{eq:deriv_expand}
    \frac{\partial}{\partial t} \mapsto  -\frac{c}{\epsilon} \frac{\partial}{\partial \theta} +  \frac{\partial}{\partial t}, \qquad  \frac{\partial}{\partial x} \mapsto  \frac{1}{\epsilon} \frac{\partial}{\partial \theta}, \qquad \frac{\partial}{\partial y} \mapsto  \frac{q}{\epsilon} \frac{\partial}{\partial \theta} +  \frac{\partial}{\partial y}.
\end{equation}
After inserting the derivatives \eqref{eq:deriv_expand} and the expansion \eqref{eq:exact_expans} into \eqref{eq:kp_cons_law_gen}, we then apply the integral operator $\bar{F}=\int_{-\infty}^{\infty} F(u) \, \mathrm{d}\theta$ to each element in the equation. 
All total derivatives with respect to $\theta$ disappear, since we assumed that every $u_i$ and their derivatives vanish as $\theta \to \pm \infty$. We are therefore left with
\begin{equation}
    \int_{-\infty}^\infty\left( \frac{\partial F}{\partial t} + \frac{\partial G}{\partial x}+\frac{\partial H}{\partial y} \right) \mathrm{d}\theta= \frac{\partial \overline{F}}{\partial t} + \frac{\partial \overline{H}}{\partial y} + \mathcal{O}(\epsilon).
\end{equation}
\par As a consequence of integrability, the KP equation \eqref{eq:kp} admits an infinite number of conservation laws of the form \eqref{eq:kp_cons_law_gen}. We list three of them here \cite{anco_2018,anco_2021}, given in terms of the potential $\phi$ where $u=\epsilon \phi_x$:
\begin{subequations}
\label{eq:conserv_laws}
\begin{align}
\label{eq:cons_law_1}
    \left(\phi_x^2 \right)_t + \left(2\epsilon^2 (\phi_x\phi_{xxx}- \phi_{xx}^2)+\frac{2\epsilon}{3} \phi_x^3 -\phi_y^2\right)_x + \left(2\phi_x \phi_y \right)_y &= 0, \\
\label{eq:cons_law_2}
 \begin{split}   \left(\phi_x \phi_y \right)_t + \left(2\epsilon^2 (\phi_y\phi_{xxx}-\phi_{xx}\phi_{xy})+\epsilon \phi_y\phi_x^2+\phi_t\phi_y \right)_x + \\ + \left(\epsilon^2 \phi_{xx}^2+\phi_y^2-\frac{\epsilon}{3}\phi_x^3-\phi_t\phi_x \right)_y &= 0, \end{split} \\
\label{eq:cons_law_3}
 \left(\frac{\epsilon \phi_x^3}{3}-\epsilon^2 \phi_{xx}^2+\phi_y^2 \right)_t +\left(2\epsilon^2(\phi_{xt}\phi_{xx}-\phi_t \phi_{xxx})-\epsilon \phi_t \phi_{x}^2-\phi_t^2\right)_x - \left(2\phi_t \phi_y \right)_y&=0.
\end{align}\end{subequations}
These equations correspond to conservation of momentum in $x$ \eqref{eq:cons_law_1}, momentum in $y$ \eqref{eq:cons_law_2}, and energy \eqref{eq:cons_law_3} in shallow water applications. After transforming the derivatives according to \eqref{eq:deriv_expand}, inserting the expansion \eqref{eq:exact_expans}, and applying the integral operator, the zero dispersion limit of \eqref{eq:conserv_laws} becomes
\begin{subequations} 
\label{eq:conserv_laws_mod}
\begin{align}
    \label{eq:conserv_laws_mod1}
    \left(\overline{u_0^2} \right)_t  + \left(2q\overline{u_0^2} \right)_y &= 0 \\
    \label{eq:conserv_laws_mod2}
    \left(q\overline{ u_0^2 }\right)_t  + \left(\overline{u_{0\theta}^2}-\frac{\overline{u_0^3}}{3}+(q^2+c)\overline{u_0^2} \right)_y &= 0 \\
    \label{eq:conserv_laws_mod3}
    \left(-\overline{u_{0\theta}^2}+\frac{\overline{u_0^3}}{3}+q^2 \overline{u_0^2} \right)_t + \left(2qc\overline{ u_0^2} \right)_y &= 0.
\end{align}
\end{subequations}
Upon substituting \eqref{eq:x_indep_soli}, the integrals in \eqref{eq:conserv_laws_mod} can be calculated explicitly to yield the modulation conservation laws
\begin{subequations}
\label{eq:ave_cons}
\begin{align}
    \label{eq:ave_cons_1}
    \left( a^{3/2} \right)_t + \left( 2qa^{3/2}\right)_y &= 0, \\ 
     \label{eq:ave_cons_2}
     \left(q a^{3/2} \right)_t + \left(\frac{2}{15}a^{5/2}+2q^2a^{3/2} \right)_y &= 0, \\ 
    \label{eq:ave_cons_3}
    \left( \frac{a^{5/2}}{5}+q^2 a^{3/2} \right)_t + \left( \frac{2}{3}qa^{5/2}+2q^3 a^{3/2}\right)_y &= 0.
\end{align}\end{subequations}
The expression \eqref{eq:ave_cons_1} corresponds to conservation of wave action for soliton modulations. Expanding the derivatives in each of \eqref{eq:ave_cons} yields equivalent modulation equations for $a$ and $q$. Together with \eqref{eq:11}, the $x$-independent modulation equations for $a$ and $q$ can then be written in matrix form as the $x$-independent reduction of \eqref{eq:full_mod_syst},
\begin{equation}
  \label{eq:mod_syst}
  \begin{bmatrix}
    a \\ q
  \end{bmatrix}_t  + 
  \begin{bmatrix}
    2q & \frac{4}{3}a \\
    \frac{1}{3} & 2q
  \end{bmatrix}
  \begin{bmatrix}
    a \\ q
  \end{bmatrix}_y
  = 0 .
\end{equation}

Importantly, all three modulation conservation laws \eqref{eq:ave_cons} are consistent with the system \eqref{eq:mod_syst}. This system \eqref{eq:mod_syst} has previously been derived using multiple scales \cite{neu_singular_2015,ryskamp_2020} and averaged Lagrangian methods \cite{grava_numerical_2018}. The KP equation has an additional conservation law, conservation of mass (with density $u=\epsilon\phi_x$), but we do not consider this law as it yields results inconsistent with \eqref{eq:mod_syst} and \eqref{eq:ave_cons}. Note also that the modulation system \eqref{eq:mod_syst} and therefore all modulation conservation laws \cite{herreman_2010} possess the scaling symmetry
\begin{equation}
    \label{eq:scale_sym}
    (y,t,a,q) \longrightarrow (\lambda^2 y', \lambda^3 t',\lambda^2 a', \lambda q').
\end{equation}
\par The infinite number of KP conservation laws suggests that, in principle, one could continue this averaging process to find an infinite number of modulation conservation laws. In the next section we will use a more direct approach to obtain an infinite number of modulation conservation laws.

\subsection{Derivation of generalized conservation laws}
\label{sec:mod_cons_laws}
In this section we generate an infinite family of conservation laws consistent with the modulation system \eqref{eq:mod_syst}, following an approach developed by G.B. Whitham \cite{whitham}. A general soliton modulation conservation law in $y$, like those given in \eqref{eq:ave_cons}, has the form
\begin{equation}
    \label{eq:gen_cons_law}
    \frac{\partial f}{\partial t}+\frac{\partial h}{\partial y} = 0,
\end{equation}
where $f=f(a,q)$ and $h=h(a,q)$. Applying the chain rule and inserting the evolution equations \eqref{eq:mod_syst} yields
  \begin{equation}
    \label{eq:ins_gen_cons}
    \left( -2qf_a -\frac{1}{3}f_q+h_a\right)a_y + \left(-\frac{4}{3}af_a-2qf_q + h_q \right)q_y =0.
\end{equation}
Since \eqref{eq:ins_gen_cons} must be true for all values of $q_y$ and $a_y$, the terms in the parentheses must both equal zero, leading to two compatibility equations for $f$ and $h$.
 Let us consider a general density of the form
\begin{equation}
    \label{eq:gen_density}
    f(a,q) = \sum^n_{i=1} c_i a^{i+\frac{1}{2}}q^{2(n-i)}, \qquad c_1 = 1.
\end{equation}
This density ansatz is motivated by analogy with \eqref{eq:ave_cons_1} (which corresponds to \eqref{eq:gen_density} with $n=1$) and \eqref{eq:ave_cons_3} (which corresponds to \eqref{eq:gen_density} with $n=2$). In addition, the ansatz \eqref{eq:gen_density} satisfies the scaling symmetries \eqref{eq:scale_sym} of the modulation system \eqref{eq:mod_syst}. Taking the partial derivatives of \eqref{eq:gen_density} and inserting into the compatibility conditions from \eqref{eq:ins_gen_cons} yields
\begin{subequations}
\label{eq:gen_g}
\begin{align}
    \label{eq:gen_g_a}
    h_a &= \sum^n_{i=1} \left[(2i+1)q+\frac{2}{3}(n-i) \right]c_i a^{i+\frac{1}{2}}q^{2(n-i+\frac{1}{2})},\\
    \label{eq:gen_g_b}
    h_q &= \sum^n_{i=1} \left[\frac{2}{3}(2i+1)+4(n-i) \right] c_i a^{i+\frac{1}{2}}q^{2(n-i)}.
\end{align}\end{subequations}
Requiring equality of coefficients for $h_{aq}=h_{qa}$ gives a recursion relation for $c_i$ in terms of $c_{i-1}$, which can be rewritten explicitly as:
\begin{equation}
    \label{eq:c_coeffs}
     c_i =  \frac{3i}{(n-i)(2n)(2n-1)}{2n \choose 2i+1}, \quad i = 1,\dots, n-1, \qquad c_n = \frac{3}{(2n+1)(2n-1)}.
\end{equation}
By integrating $h_q$ \eqref{eq:gen_g_b} with respect to $q$ we obtain the flux as
\begin{equation}
    \label{eq:gen_flx}
    h(a,q)= \sum^n_{i=1} d_i a^{i+\frac{1}{2}}q^{2(n-i)+1},
\end{equation}
where the coefficients $d_i$ are defined as
\begin{equation}
    \label{eq:d_coeffs}
    d_i = \left(2-\frac{4(i-1)}{3(2i-2n-1)} \right)c_i, \qquad i=1,\dots,n.
\end{equation} 
Up to an arbitrary additive constant, the same result \eqref{eq:gen_flx} is also obtained by integrating \eqref{eq:gen_g_a}. Since the constant is arbitrary, we set it to zero. For $n=1,2,\dots$, the equations \eqref{eq:gen_cons_law}, \eqref{eq:gen_density}, \eqref{eq:c_coeffs}, \eqref{eq:gen_flx}, and \eqref{eq:d_coeffs} together form an infinite family of modulation conservation laws that are consistent with the modulation system \eqref{eq:mod_syst}. This family includes \eqref{eq:ave_cons_1} and \eqref{eq:ave_cons_3} as the $n=1$ and $n=2$ cases, respectively. The $n=3$ case is
\begin{equation}
\label{eq:n_3_conserv}
        \left(\frac{3}{35} a^{7/2}+ \frac{6}{5} q^2 a^{5/2}+q^4 a^{3/2} \right)_t + \left( \frac{2}{5}qa^{7/2}+\frac{44}{15}q^3a^{5/2}+2q^5a^{3/2} \right)_y = 0.
\end{equation}

We can use a different density ansatz to obtain a second infinite family of modulation conservation laws. This family includes \eqref{eq:ave_cons_2} as the $n=1$ case. The process is the same as above, so we simply state the results. The density $\tilde{f}(a,q)$ takes the form
\begin{subequations}
\label{eq:density_2}
\begin{align}
    \tilde{f}(a,q) &= \sum^n_{i=1} \tilde{c}_i a^{i+\frac{1}{2}}q^{2(n-i)+1},\qquad \quad
    \tilde{c}_1 = 1, \\ \tilde{c}_i &= \frac{(2i-2n-3)(i-n-1)}{(i-1)(2i+1)}\tilde{c}_{i-1},\qquad i = 2,\dots,n.
\end{align}\end{subequations}
The corresponding flux $\tilde{h}(a,q)$ can be found to be
\begin{subequations}
\label{eq:flux_2}
\begin{align}
    \tilde{h}(a,q) &= \sum^{n+1}_{i=1} \tilde{d}_i a^{i+\frac{1}{2}}q^{2(n-i)+2}, \qquad \tilde{d}_1=2, \qquad \tilde{d}_{n+1}=\frac{2}{9+6n}\tilde{c}_n,\\ \tilde{d}_{i}&=2\tilde{c}_{i}+\frac{2n-2i+1}{3(i+3/2)}\tilde{c}_{i-1}, \qquad i = 2,\dots,n.
\end{align}\end{subequations}
\section{Shock solutions of the soliton modulation equations}
\label{sec:shock_soln}
In this section we show that the Y soliton can be understood as a discontinuous, weak solution to all the modulation conservation laws given in section~\ref{sec:ave_cons_laws}. In particular, the velocity of the vertex $s_y$ of the Y soliton and each legs' parameters satisfy Rankine-Hugoniot conditions for the conservation of waves equation \eqref{eq:11} and \emph{modified} Rankine-Hugoniot jump conditions for all other modulation conservation laws given above. We also summarize some properties of the Y soliton shock.

\subsection{Shock solution to general KP conservation laws}
\label{sec:traveling_waves}
We can use jump conditions arising from any averaged KP conservation law to determine the correct velocity of the Y soliton. However, it is evident from figure~\ref{fig:soli_solns}$(b)$ that the modulation equations for $a(y,t)$ and $q(y,t)$ are \emph{multivalued} in $y$; for example, for $y<z(t)$, $\Dot{z}=s_y$, $a(y,t)$ takes on two different constant values: $a(y,t)=a_1$ and $a(y,t)=a_3$.  Consequently, the calculation of the jump condition must be modified from traditional Rankine-Hugoniot jump conditions, which are defined using the difference operator $\llbracket u \rrbracket = u_+-u_-$ between two states $u_\pm$. Instead, we define a modified difference operator:
\begin{definition}
\label{def:mod_diff_oper}
Consider a multivalued function $u(y)$ with a discontinuity at $y=z(t)$ and
\begin{equation*}
   \lim_{y\to z(t)^+} u = \{ u_+^{(1)},\dots,u_+^{(n)} \}, \qquad  \lim_{y\to z(t)^-}u= \{u_-^{(1)},\dots,u_-^{(m)} \}, \qquad \Dot{z}(t) = s_y(t).
\end{equation*} The \emph{multivalued difference operator} $\llbracket \, \rrbracket_{n,m}$ is defined as 
\begin{equation*}
    \llbracket u \rrbracket_{n,m}=\sum_{j=1}^n u_+^{(j)} -\sum_{j=1}^m u_-^{(j)}.
\end{equation*}
\end{definition}
Using the multivalued difference operator, we can define modified Rankine-Hugoniot conditions:
\begin{definition}
\label{def:mod_rh_cond}
For a conservation law $(f(u))_t + (h(u))_y = 0$, where $u(y,t)$ is multivalued and discontinuous as in definition~\ref{def:mod_diff_oper}, the discontinuity with velocity $s_y$ is calculated using \emph{modified Rankine-Hugoniot conditions} as
\begin{equation*}
    s_y \llbracket f(u) \rrbracket_{n,m} = \llbracket h(u) \rrbracket_{n,m}\, .
\end{equation*}
\end{definition}
Since both $a(y,t)$ and $q(y,t)$ are multivalued for either $y<z(t)$ or $y>z(t)$ for the Y soliton, we will apply modified Rankine-Hugoniot jump conditions to the modulation conservation laws derived above. The summation of multivalued densities and fluxes in definition~\ref{def:mod_rh_cond} will be justified below. First, we demonstrate how to perform this calculation using the wave action conservation law \eqref{eq:ave_cons_1}, for which modified Rankine-Hugoniot jump conditions take the form
\begin{equation}
\label{eq:wave_action_y_jump_cond}
 s_y =\frac{\left\llbracket 2qa^{3/2} \right\rrbracket_{n,m}}{\left\llbracket a^{3/2} \right\rrbracket_{n,m} }\,.
 \end{equation}
The multivalued difference operator $\llbracket \, \rrbracket_{n,m}$ is applied by adding the appropriate quantities that share the same half plane in $y$, in this case legs 1 and 3 (see figure~\ref{fig:soli_solns}$(b)$), so that $n=1$, $m=2$, and
 \begin{equation}
 \label{eq:amp_48}
    s_y = \frac{\left(2q_2a_2^{3/2}\right)-\left(2q_1a_1^{3/2}+2q_3 a_3^{3/2}\right)}{\left(a_2^{3/2}\right)-\left(a_1^{3/2}+a_3^{3/2}\right)} = \frac{2}{3}\left(k_1 + k_2 + k_3 \right),
 \end{equation}
 where the simplification from physical to phase variables follows from \eqref{eq:y_soli_par} after some algebra. The result \eqref{eq:amp_48} agrees with the known velocity $s_y$ of the Y soliton vertex in $y$ \eqref{eq:exact_spd}. We remark that modified jump conditions for the zero dispersion limit of the Y soliton applied to any of the modulation conservation equations \eqref{eq:ave_cons} will also correctly determine $s_y$.
\par One can justify the adding of the fluxes and densities in $y$ as follows. Consider any conservation law of the full KP equation of the form \eqref{eq:kp_cons_law_gen}. A general travelling wave solution that is localized in $x$ with velocity $(s_x,s_y)$ has the form
\begin{equation}
    \label{eq:travel_ansatz}
    u = u(\xi,\eta), \qquad \xi = x-s_x t, \qquad \eta = y-s_y t.
\end{equation}
Inserting \eqref{eq:travel_ansatz} into \eqref{eq:kp_cons_law_gen} yields
 \begin{equation}
    \label{eq:88}
     -s_x F_\xi - s_y F_\eta +G_\xi+H_\eta =0.
 \end{equation}
 We now fix $\eta$ and integrate \eqref{eq:88} with respect to $\xi$ from $-\infty$ to $\infty$. This definite integral can be calculated for the first and third terms of \eqref{eq:88} by evaluating $F$ and $G$ at $\xi \to \pm \infty$, where they are zero due to the localized nature of $u$. Thus, we obtain
 \begin{equation*}
\left(\int_{-\infty}^\infty -s_y F(\xi,\eta) + H(\xi,\eta) \, \mathrm{d}\xi \right)_\eta = 0.
\end{equation*}
This expression can be integrated with respect to $\eta$ to yield 
\begin{equation}
\label{eq:91}
-s_y \overline{F}(\eta)+\overline{H}(\eta) = A,
 \end{equation}
 where $A$ is a constant, and the average, denoted e.g. by $\overline{F}$, is calculated by integrating over the real line in the $\xi$-direction. Evaluating \eqref{eq:91} at two different values $\eta^+ \gg 1$ and $-\eta^- \gg 1$ allows us to eliminate the constant $A$, and we can solve for $s_y$ to obtain
 \begin{equation*}
     s_y = \frac{\overline{H}(\eta^+)-\overline{H}(\eta^-)}{\overline{F}(\eta^+)-\overline{F}(\eta^-)},
 \end{equation*}
   which has an identical form to \eqref{eq:wave_action_y_jump_cond}. For the Y soliton given in section~\ref{sec:miles_res}, averaging the flux and density over $\xi$ from $-\infty$ to $\infty$, at some $-\eta^- \gg 1$, is equivalent to adding together the fluxes and densities of the two smaller legs (legs 1 and 3).  Thus, $\overline{F}(\eta^-)$ can be expressed as $\overline{F}_1+\overline{F}_3$, where $\overline{F}_i$ denotes the value of the averaged flux $\overline{F}$ at the $i^{\rm th}$ leg. Similarly, $\overline{H}(\eta^-)=\overline{H}_1+\overline{H}_3$. Our final expression for the velocity in $y$ becomes
 \begin{equation*}
          s_y = \frac{\overline{H}_2-\left( \,\overline{H}_1+\overline{H}_3\right)}{\overline{F}_2-\left(\, \overline{F}_1+\overline{F}_3\right)},
 \end{equation*}
which corresponds to the addition of fluxes and densities in the averaged conservation law in \eqref{eq:amp_48}. In the zero dispersion limit, the Y soliton vertex region shrinks to a point, and we merely require $\eta^+>0$ and $\eta^-<0$. A similar construction holds if the multivalued pair of Y soliton legs occurs for $y>z(t)$ where now $n=2$, $m=1$ in \eqref{eq:wave_action_y_jump_cond}.

\subsection{Shock solutions to families of conservation laws}
In this section, we will show that the Y soliton satisfies modified Rankine-Hugoniot conditions (see definition~\ref{def:mod_rh_cond}) for the infinite family of conservation laws given above. Specifically, we will prove the following:
\begin{proposition}
Consider a modulation conservation law \eqref{eq:gen_cons_law} with density given by \eqref{eq:gen_density} and \eqref{eq:c_coeffs} and flux given by \eqref{eq:gen_flx} and \eqref{eq:d_coeffs}, respectively. Then modified Rankine-Hugoniot jump conditions for the conservation law applied to the Y soliton parameters \eqref{eq:y_soli_par} yield
\begin{equation}
    \label{eq:pf_goal}
    \frac{\left\llbracket h(a,q) \right\rrbracket_{1,2} }{\left\llbracket f(a,q) \right\rrbracket_{1,2} }=\frac{2}{3}(k_1 + k_2 + k_3)=s_y .
\end{equation}
\end{proposition}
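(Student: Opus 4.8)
The plan is to pass to Riemann-invariant coordinates for the modulation system \eqref{eq:mod_syst}, in which the required ratio collapses to a single one-dimensional identity for the density $f$; this generalizes the explicit $n=1$ computation \eqref{eq:amp_48}.

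First I would note that $r:=q-\sqrt a=k_-$ and $s:=q+\sqrt a=k_+$ are Riemann invariants of \eqref{eq:mod_syst}: a one-line eigenvalue computation gives the characteristic speeds $\lambda_-=2q-\tfrac23\sqrt a=\tfrac43 r+\tfrac23 s$ (along which $r$ is constant) and $\lambda_+=2q+\tfrac23\sqrt a=\tfrac43 s+\tfrac23 r$ (along which $s$ is constant). In these coordinates the two bracketed compatibility relations in \eqref{eq:ins_gen_cons} decouple into $h_r=\lambda_- f_r$ and $h_s=\lambda_+ f_s$, whose cross-differentiation is the density equation $f_r-f_s=(s-r)f_{rs}$. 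Two further facts are recorded: by \eqref{eq:y_soli_par} the three legs of the Y soliton correspond to $(r,s)=(k_1,k_2),(k_1,k_3),(k_2,k_3)$ for legs $1,2,3$ respectively; and since every monomial in \eqref{eq:gen_density} and \eqref{eq:gen_flx} carries a positive power of $a=\tfrac14(s-r)^2$, both $f,h$ and their first $s$-derivatives vanish on the diagonal $\{r=s\}$.

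Next I would rewrite the modified jump as a mixed second difference. Because leg $2$ is $(k_1,k_3)$, legs $1$ and $3$ are $(k_1,k_2)$ and $(k_2,k_3)$, and $g(k_2,k_2)=0$ for $g\in\{f,h\}$, one has, for the rectangle $R=[k_1,k_2]\times[k_2,k_3]$,
\[
\llbracket g\rrbracket_{1,2}=g(k_1,k_3)-g(k_1,k_2)-g(k_2,k_3)=-\iint_R g_{rs}\,dr\,ds,
\]
so \eqref{eq:pf_goal} is equivalent to $\iint_R h_{rs}=\tfrac23(k_1+k_2+k_3)\iint_R f_{rs}$. Substituting $h_r=\lambda_- f_r$, performing the $s$-integration, and then integrating by parts in $r$ (using the diagonal vanishing to kill boundary terms) reduces this, after collecting the coefficients of $f(k_1,k_2)$, $f(k_1,k_3)$, $f(k_2,k_3)$, to the single one-dimensional identity
\[
(k_2-k_1)\big(f(k_2,k_3)+f(k_1,k_3)\big)-(k_3-k_1)\,f(k_1,k_2)=2\int_{k_1}^{k_2}\big[f(r,k_3)-f(r,k_2)\big]\,dr .
\]

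The main obstacle is this last identity, since it is exactly where the coefficient recursion \eqref{eq:c_coeffs} (equivalently, the density equation) enters essentially: a single term $a^{i+1/2}q^{2(n-i)}$ of \eqref{eq:gen_density} does \emph{not} on its own satisfy it. I would dispatch it by either of two routes. \emph{(i) Direct verification:} write $f=\sum_{i=1}^n 2^{-(2n+1)}c_i\,(r+s)^{2(n-i)}(s-r)^{2i+1}$ from \eqref{eq:gen_density}--\eqref{eq:c_coeffs}, carry out the elementary polynomial integration, and match coefficients of $k_1^ak_2^bk_3^c$ on the two sides; the resulting equalities are standard binomial-sum identities, which the recursion \eqref{eq:c_coeffs} was constructed to make hold (as a sanity check, the cases $n=1,2,3$, giving \eqref{eq:ave_cons_1}, \eqref{eq:ave_cons_3} and \eqref{eq:n_3_conserv}, can be verified by hand). \emph{(ii) Alternative route:} differentiate the identity in $k_3$ to remove the integral, obtaining the polynomial relation $f(k_2,k_3)-f(k_1,k_3)+f(k_1,k_2)=(k_3-k_1)f_s(k_2,k_3)-(k_3-k_2)f_s(k_1,k_3)$ (its value at $k_3=k_2$ being $0=0$), and close it using the density equation together with the homogeneity $r f_r+s f_s=(2n+1)f$; or else recognize the family \eqref{eq:gen_cons_law}--\eqref{eq:d_coeffs} as lying in the span of averages of exact KP conservation laws \eqref{eq:kp_cons_law_gen}, for which the statement is immediate from the traveling-wave argument of section~\ref{sec:traveling_waves}. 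The analogous result for the second family \eqref{eq:density_2}--\eqref{eq:flux_2} follows by the same reduction.
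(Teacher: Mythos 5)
Your proposal is correct, and it takes a genuinely different route from the paper's. The paper works directly with the polynomial expansions of $f$ and $h$ in the phase variables, observes that after applying $\llbracket\,\cdot\,\rrbracket_{1,2}$ only the monomials $s^{2n+1},r^{2n+1},r^{2n}s,rs^{2n}$ (and three analogous ones for $h$) survive, and reduces \eqref{eq:pf_goal} to the three explicit coefficient-sum identities \eqref{eq:sums}, which are then checked from \eqref{eq:c_coeffs} and \eqref{eq:d_coeffs}. You instead exploit the structure of the conservation-law pair itself: the decoupled relations $h_r=\lambda_-f_r$, $h_s=\lambda_+f_s$ and the resulting density equation $f_r-f_s=(s-r)f_{rs}$, together with the observation that $\llbracket\,\cdot\,\rrbracket_{1,2}$ is a mixed second difference over the rectangle $[k_1,k_2]\times[k_2,k_3]$ (using $f(k_2,k_2)=0$). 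I have checked your reduction: the coefficient collection does yield exactly your one-dimensional identity, and it holds for $n=1$. Where you under-deliver is route (i), which merely asserts that the recursion \eqref{eq:c_coeffs} makes the final identity true --- that is no more (and no less) complete than the paper's closing ``a calculation demonstrates that \eqref{eq:sums} holds.'' Route (ii), however, actually closes the argument and is the stronger contribution: differentiating your identity in $k_3$ and eliminating the integral via the density equation leaves the residual $G=f(k_2,k_3)-f(k_1,k_3)+f(k_1,k_2)-(k_3-k_1)f_s(k_2,k_3)+(k_3-k_2)f_s(k_1,k_3)$, and the $s$-derivative of the density equation forces $f_{ss}(r,s)=K(s)(s-r)$, whence $\partial_{k_3}G\equiv0$ and $G$ vanishes at $k_3=k_2$ because $f_s$ vanishes on the diagonal; the homogeneity relation you invoke is not even needed. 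This buys something the paper's proof does not: the conclusion holds for \emph{any} polynomial density of \eqref{eq:mod_syst} vanishing to sufficient order at $a=0$, with no reference to the explicit coefficients \eqref{eq:c_coeffs}, \eqref{eq:d_coeffs}, so it covers the second family \eqref{eq:density_2}--\eqref{eq:flux_2} for free. Your last suggestion --- that the family lies in the span of averaged KP conservation laws so the traveling-wave argument of section~\ref{sec:traveling_waves} applies --- is plausible but unproven here and should not be leaned on.
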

\begin{proof}
 We proceed by rewriting the density for a single soliton \eqref{eq:gen_density} in terms of phase variables $s=q+\sqrt{a}$ and $r=q-\sqrt{a}$ \eqref{e:inversemap} as
\begin{equation}
    \label{eq:ri_dens}
    f(s,r) = \sum_{i=1}^n \frac{c_i}{2^{2n+1}}(s - r)^{2i+1} (s+r)^{2(n-i)}.
\end{equation}
The density polynomial \eqref{eq:ri_dens} can be rewritten as
\begin{equation*}
    f(s,r) = (s^{2n+1}-r^{2n+1}) \sum_{i=1}^n \frac{c_i}{2^{2n+1}} + (r^{2n}s-r s^{2n}) \sum_{i=1}^n \frac{c_i}{2^{2n+1}}(4i-2n+1).
\end{equation*}
These terms come from expanding the binomials in \eqref{eq:ri_dens} and distributing for the product. Any term that contains a power of less than $2n$ cancels. Similarly, the flux \eqref{eq:gen_flx} can be rewritten as
\begin{align*}
        h(s,r) &= \sum_{i=1}^n \frac{d_i}{2^{2n+1}}(s - r)^{2i+2} (s+r)^{2(n-i)+1}, \\
        \begin{split} &= (s^{2n+2}-r^{2n+2}) \sum_{i=1}^n \frac{d_i}{2^{2n+2}} + (r^{2n}s-r s^{2n}) \sum_{i=1}^n \frac{d_i}{2^{2n+2}}(4i-2n) + \\ &\quad +  (r^{2n-1}s^2-r^2 s^{2n-1})\sum_{i=1}^n \frac{d_i}{2^{2n+2}}(2n^2-n-8in+8i^2-1). \end{split}
\end{align*}
If we now insert the phase variables $k_i$ for each leg, evaluate $\left\llbracket f \right\rrbracket_{1,2} $, and multiply by the velocity $s_y=\frac{2}{3}(k_1+k_2+k_3)$, we find after some algebra that this product is equal to $\left\llbracket h \right\rrbracket_{1,2} $ if and only if the following equalities hold
\begin{equation}
\label{eq:sums}
    \sum_{i=1}^n \frac{c_i}{3} (4i-2n+1) =\sum_{i=1}^n \frac{d_i}{2}(2i-n) = \sum_{i=1}^n -\frac{d_i}{4}(2n^2-n-8in+8i^2-1).
\end{equation}
Thus, if we can prove \eqref{eq:sums}, we have shown \eqref{eq:pf_goal}. Using \eqref{eq:c_coeffs} and \eqref{eq:d_coeffs}, a calculation demonstrates that \eqref{eq:sums} holds.
\end{proof}
Thus, the parameters of the resonant Y soliton satisfy modified Rankine-Hugoniot jump conditions for an infinite family of conservation laws in $y$. We omit the similar proof for the second family of conservation laws given in \eqref{eq:density_2} and \eqref{eq:flux_2}.
\subsection{Shock solution to conservation of waves}
\label{sec:cons_waves_spd}
In this section, we use \emph{ordinary} Rankine-Hugoniot jump conditions for the modulation equation for $q$ \eqref{eq:11} that arises from conservation of waves. Instead of modified Rankine-Hugoniot jump conditions, we choose any two legs of the Y soliton and insert them into the ordinary jump conditions for \eqref{eq:11}. The justification for using modified Rankine-Hugoniot jump conditions above depended on the fact that the modulation conservation laws were derived from KP conservation laws. In contrast, the wave action conservation law \eqref{eq:11} is a kinematic condition arising from the definition of our modulation ansatz \eqref{eq:x_indep_soli}. A similar kinematic calculation to determine the speed of multisoliton structures was utilized in \cite{ostrovsky_kinematics_2020}. 

Without loss of generality, we consider the jump between legs 1 and 2 (see figure~\ref{fig:soli_solns}$(b)$). Then, ordinary jump conditions ($\llbracket \, \rrbracket \equiv \llbracket \, \rrbracket_{1,1}$) for \eqref{eq:11} become (with simplification using \eqref{eq:y_soli_par})
\begin{align}
    \label{eq:phase_jump_cond_y}
s_y\llbracket q\rrbracket &= \left\llbracket \frac{a}{3}+q^2 \right\rrbracket,\\
s_y &= \frac{\frac{1}{3}(k_2-k_1)^2+(k_2+k_1)^2-\frac{1}{3}(k_3-k_1)^2-(k_3+k_1)^2}{2((k_2+k_1)-(k_3+k_1))}= \frac{2}{3}(k_1+k_2+k_3),
\end{align}
which is the correct velocity $s_y$ \eqref{eq:exact_spd}. One can verify that the jump condition \eqref{eq:phase_jump_cond_y} yields the correct velocities for the choice of \emph{any two legs} from the three that make up the resonant Y soliton. This is true even for the two legs in the multivalued region, i.e. inserting legs 1 and 3 into the jump conditions in $y$ \eqref{eq:phase_jump_cond_y}.
\par
We justify this result as follows. The velocity of the vertex of the Y soliton is calculated directly by setting $\theta_1=\theta_2=\theta_3$, where $\theta_i$ is the non-modulated phase \eqref{eq:soli} of the $i^{\rm th}$ soliton leg. Consider $\theta_i=\theta_j$, which can be rewritten as
\begin{equation*}
     (q_i-q_j )y = \left(\frac{a_i}{3} +q_i^2-\frac{a_j}{3}+q_j^2 \right)t,
\end{equation*}
up to an overall additive phase constant that we have set to zero. Since $y/t=s_y$, the above equation is precisely the jump condition \eqref{eq:phase_jump_cond_y}. Thus, the requirement that $\theta_i=\theta_j$ is equivalent to calculating the shock velocity $s_y$ using ordinary jump conditions. The jump conditions \eqref{eq:phase_jump_cond_y} are essentially kinematic conditions ensuring that the three solitons are travelling together.

\subsection{Shocks in $y$-independent variables}
\label{sec:y_indep}
In this section, we briefly show how the results from the above sections using $x$-independent equations can be repeated for $y$-independent equations, correctly obtaining the velocity $s_x$ of the Y soliton vertex in $x$. In order to transform $x$-independent modulation equations \eqref{eq:mod_syst} to $y$-independent modulation equations, one can use the identification
\begin{equation}
    \label{eq:y_to_x_trans}
    \partial_y \to -q \partial_x, \qquad \partial_t \to \partial_t +c \partial_x,
\end{equation}
where the first expression in \eqref{eq:y_to_x_trans} transforms the derivative with respect to $y$ to a derivative with respect to $x$ (assuming $q\neq 0 $). The second expression in \eqref{eq:y_to_x_trans} reflects that modulations in $x$ must occur at the velocity $c$ of the soliton in $x$.

Assuming a modulation conservation law in $y$ with the general form $f_t+h_y = 0$, one can utilize the transformation \eqref{eq:y_to_x_trans} combined with conservation of waves $q_t + c_y=0$ (c.f. \eqref{eq:11}) to verify that there is an equivalent modulation conservation law in $x$ having the form
\begin{equation}
    \label{eq:cons_law_rewrite}
    \left(\frac{f}{q} \right)_t+\left(\frac{fc}{q}-h \right)_x = 0.
\end{equation}
With \eqref{eq:cons_law_rewrite}, one can extend all of the above results in $y$ to results in $x$.

\subsection{Properties of the Y soliton shock solution}
\label{sec:unique}
\subsubsection{Construction and uniqueness of Y soliton shocks}
In this section we will discuss how to construct the Y soliton using modulation conservation laws and make some remarks regarding uniqueness. For three line solitons travelling together there are eight free variables: six soliton parameters and the velocity $(s_x,s_y)$. As the Y soliton is a three-parameter family of solutions to the KP equation \eqref{eq:kp}, at least five equations are required to completely determine the Y soliton. The speed relation \eqref{eq:rotate_spd} and two jump conditions for conservation of waves \eqref{eq:11} yield three independent equations. We obtain additional equations using the modified jump conditions in definition~\ref{def:mod_rh_cond} applied to  \eqref{eq:ave_cons} and the infinite family of conservation laws. The eight unknown parameters of the Y soliton solve this infinite system of algebraic equations. 

A discontinuous solution to the modulation conservation laws as described above is said to be \emph{admissible} if it corresponds to the zero dispersion limit of a travelling wave solution to the KP equation \eqref{eq:kp}. The Y soliton is the only KP travelling wave solution consisting of three solitons joined together, so any other discontinuous solution to the modulation conservation laws with three solitons is therefore inadmissible. It remains an open question as to how many modulation conservation laws are necessary to uniquely determine the Y soliton solution, i.e.  what is the minimal subset of the infinite set of algebraic equations that does not yield inadmissible solutions. 

The Hilbert basis theorem implies that a finite subset of equations exists that gives necessary and sufficient conditions to determine the unique solution to the entire infinite set of equations (see e.g. \cite{madhi_hybrid_2017}). The simplest, smallest candidate for this subset consists of \eqref{eq:rotate_spd}, two equations from jump conditions for conservation of waves \eqref{eq:11}, and modified jump conditions for conservation of momentum in $x$ \eqref{eq:ave_cons_1} and $y$ \eqref{eq:ave_cons_2}. Numerical calculations reveal that these five equations have a second (irrational) solution family that is inadmissible. Including the modified jump condition for conservation of energy \eqref{eq:ave_cons_3} does not eliminate this inadmissible family, but the next conservation law \eqref{eq:n_3_conserv} does eliminate it. Whether or not this set of six algebraic equations is sufficient to guarantee a unique solution corresponding to the admissible Y soliton is a question left for further study.

Using a similar approach, we can show that \emph{no two half solitons} can be connected by a single shock satisfying modified Rankine-Hugoniot jump conditions. First, no KP travelling wave solution consisting of two half solitons joined together exists, other than a soliton (trivially), so no corresponding  discontinuous solution can be admissible. Second, attempting to solve for the parameters of such a solution using conservation laws yields a combination of parameters with nonzero imaginary parts or negative amplitude. Consider a shock solution to the modulation system \eqref{eq:mod_syst} consisting of two half solitons
  \begin{equation}
  \label{eq:36}
  a(y,t) =
  \begin{cases}
    a_2 & y > s_y t \\ a_1 &  y< s_y t
  \end{cases}, \qquad q(y,t) = \begin{cases}
    q_2 & y > s_y t \\ q_1 &  y< s_y t
  \end{cases}, 
\end{equation}
where $a_1,a_2 > 0$. In order for a shock to exist in the modulation variables, ordinary jump conditions arising from conservation of waves \eqref{eq:phase_jump_cond_y}, as well as modified jump conditions for the conservation laws \eqref{eq:ave_cons_1} and \eqref{eq:ave_cons_2} should all equal $s_y$. Setting these three jump conditions equal to each other eliminates $s_y$ and gives two polynomial equations in $q$ and $\sqrt{a}$.  Given some lower soliton with parameters $(a_1,q_1)$, we obtain a nonlinear system with two equations and two unknowns. Solving for $(a_2,q_2)$ yields the following solutions for $a_2$
\begin{equation}
\label{eq:inexistence}
    a_2 \in \bigg\{\pm a_1, \left(-\frac{1}{2}\pm \frac{i}{2}\pm \sqrt{-1 \pm \frac{i}{2}} \right)a_1 \bigg\},
\end{equation}
where each $\pm$ should be understood to operate independently from the others, yielding a total of ten unique solutions for $a_2$. Given that $a_1>0$, each solution for $a_2$ in \eqref{eq:inexistence} is either negative or complex and is therefore not admissible. Consequently, it is impossible for initial conditions as given in \eqref{eq:36} to satisfy the modulation jump conditions unless $a_1=a_2$, $q_1=q_2$. Any shock solution to the modulation system \eqref{eq:mod_syst} must therefore consist of at least three half solitons joined together. 

\subsubsection{Y soliton as shock in Riemann invariants}
We find some additional properties of the Y soliton modulation shock by examining the $x$-independent modulation system \eqref{eq:mod_syst}, which can be written in diagonal form as \cite{ryskamp_2020}
    \begin{equation}
    \label{eq:RI_y}
        \frac{\partial}{\partial t }R_\pm + V_\pm \frac{\partial}{\partial y} R_{\pm} = 0, \qquad R_\pm = q \pm \sqrt{a}, \qquad V_{\pm}=2q \pm \frac{2}{3}\sqrt{a}=\frac{4}{3}R_\pm + \frac{2}{3}R_\mp \, ,
        \end{equation}
where $R_\pm$ are Riemann invariants and $V_\pm$ are the corresponding characteristic velocities. As pointed out in \cite{ryskamp2020_mf}, $R_\pm$ above are identical to the phase variables $k\pm$ for a soliton \eqref{e:inversemap}. As a result, for the Y soliton, one Riemann invariant is conserved for the jump from the Y soliton stem (leg 2 in figure~\ref{fig:soli_solns}$(b)$) to the other legs (leg 1 or 3). From leg 2 to leg 3, $R_+=k_3$ is conserved. Examining characteristic velocities \eqref{eq:RI_y} shows that this is a \emph{compressive} shock in the slow characteristic family (i.e. $V_-(y>s_y t)<V_-(y<s_y t)$). From leg 2 to leg 1 in figure~\ref{fig:soli_solns}$(b)$, $R_-=k_1$ is conserved, which in this case is an \emph{expansive} shock in the fast characteristic family (i.e. $V_+(y>s_y t)>V_+(y<s_y t)$). For the jump from leg 1 to leg 3, neither Riemann invariant is conserved. This jump in $x$ passes through $q=0$, where the $y$-independent modulation system is nonstrictly hyperbolic, and in fact the $y$-independent soliton ansatz cannot be defined for $q=0$ (cf. \cite{ryskamp2020_mf}).

\par The conservation of Riemann invariants gives us one more way to calculate the velocity of the Y soliton $s_y$. For the jump where $R_-=k_1$ is conserved, from leg 1 to leg 2, we can use ordinary jump conditions for the hyperbolic equation for the Riemann invariant in $y$ \eqref{eq:RI_y} to correctly calculate $s_y$, ignoring leg 3. A similar calculation holds for conservation of $R_+$ from leg 2 to leg 3.

\section{Application: V-shape initial conditions}
\label{sec:v_shape}
The previous sections showed that the Y soliton can be understood as a shock solution to an infinite family of modulation conservation laws. In this section we apply these insights to V-shape initial conditions, i.e. the Mach reflection problem, which in modulation variables is written as
\begin{equation}
    \label{eq:ArrowheadIC}
    a(y,0)=a_0, \qquad
    q(y,0)=
    \begin{cases} 
        -q_0 & y>0 \\
        q_0 & y<0 
    \end{cases}, \quad
\end{equation} where $q_0>0$. The initial conditions \eqref{eq:ArrowheadIC} model the symmetrical oblique interaction of two line solitons, which is identical to a single soliton encountering an inward oblique corner through a method of images argument \cite{yeh_2010,kodama_yeh_2016} (see schematic in figure~\ref{fig:mach_ref_schematic}). This problem is well-studied \cite{kodama_2009,yeh_2010,kodama_yeh_2016,kodama_book,chakravarty_numerics_2017}, but the following modulation theory approach offers new insight on the dynamics of the system evolution. Given a modulation solution $a(y,t)$ and $q(y,t)$ to \eqref{eq:ArrowheadIC}, the solution of the full KP equation \eqref{eq:kp} can be reconstructed via \begin{equation}
  \label{eq:39}
    u(x,y,t) = a(y,t) \,\mathrm{sech}^2\left (
      \sqrt{\frac{a(y,t)}{12}} \theta \right ), \qquad
    \theta = x + \int_0^y q(y',t)\,\mathrm{d}y' - \int_0^t
    c(0,t')\,\mathrm{d} t' ,
\end{equation}
where the soliton velocity satisfies $c(y,t) = a(y,t)/3 + q(y,t)^2$ (cf.~\eqref{eq:soli}, \eqref{eq:x_indep_soli}). Numerical evolutions of the full KP equation \eqref{eq:kp} with $\epsilon=1$ and initial conditions \eqref{eq:ArrowheadIC} projected onto \eqref{eq:39} (with smoothing of $q(y,0)$ to prevent Gibbs phenomenon) are shown in figure~\ref{fig:v_shape_stem} for $\sqrt{a_0}>q_0$ and in figure~\ref{fig:reg_ref} for $\sqrt{a_0}<q_0$. Note that the scale-free initial condition \eqref{eq:ArrowheadIC} and modulation equations \eqref{eq:full_mod_syst} imply that the $\epsilon \to 0$ limit for $t = \mathcal{O}(1)$ is equivalent to the $t \to \infty$ limit with $\epsilon = \mathcal{O}(1)$.
\begin{figure}
    \centering
    \includegraphics{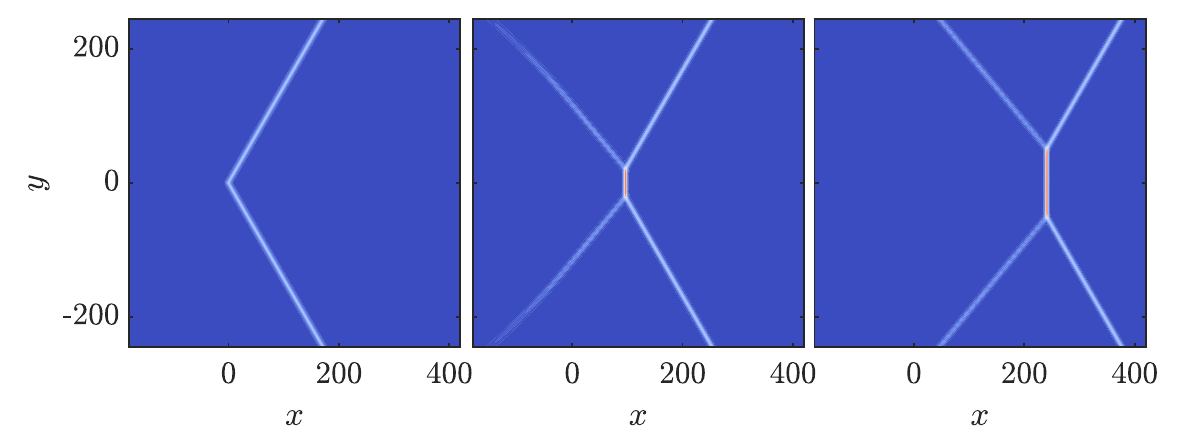}
    \caption{Numerical evolution of V-shape initial conditions  $\eqref{eq:ArrowheadIC}$ projected onto \eqref{eq:39} for the KP equation \eqref{eq:kp} with $\epsilon=1$ for $a_0=1$ and $q_0=.7$ on $t \in \{0,100,250\}$.}
    \label{fig:v_shape_stem}
\end{figure}

\subsection{Mach reflection}
By directly examining the characteristic velocities \eqref{eq:RI_y}, it is evident that the solution to \eqref{eq:ArrowheadIC} must give rise to shocks in the modulation variables. For $y>0$ in \eqref{eq:ArrowheadIC} the characteristic velocities are $V_{\pm}(y>0)=-2q_0 \pm 2 \sqrt{a_0}/3$, while for $y<0$, $V_\pm(y<0)=2q_0\pm 2 \sqrt{a_0}/3$. Since $q_0>0$, $V_\pm(y<0)>V_\pm(y>0)$, so both characteristic velocities are compressive and should give rise to shocks in the modulation equations. As Y solitons are shocks in modulation variables, we will look for solutions to \eqref{eq:ArrowheadIC} consisting of two Y solitons.

Initial data \eqref{eq:ArrowheadIC} for the $x$-independent modulation equations \eqref{eq:11} and \eqref{eq:conserv_laws_mod} constitute a Riemann problem for a hyperbolic system of two equations. Traditionally, such problems are solved in terms of two waves---one slow and the other fast---that are separated by a constant region. Each wave is either a shock or rarefaction wave. Here, the multivalued structure of the Y shock and the modified Rankine-Hugoniot conditions lead us to solve this Riemann problem with four waves separated by constant regions. The four waves are visible in the KP simulation of figure~\ref{fig:v_shape_stem}. We observe the two reflected Y shocks separated by a constant stem $(a_{\rm i},q_{\rm i})$ and two rarefaction waves that lead the growth of two new partial soliton legs from $a=0$ to $(a,q)=(a_{\rm n},q_{\rm n})$ (upper left leg) and $(a,q)=(a_{\rm n},-q_{\rm n})$ (lower left leg). The new legs arise as a consequence of the fact that modulation shocks must be composed of at least three soliton legs to satisfy the jump conditions (see section~\ref{sec:unique}). 
\begin{figure}
    \centering
    \includegraphics[scale=.25]{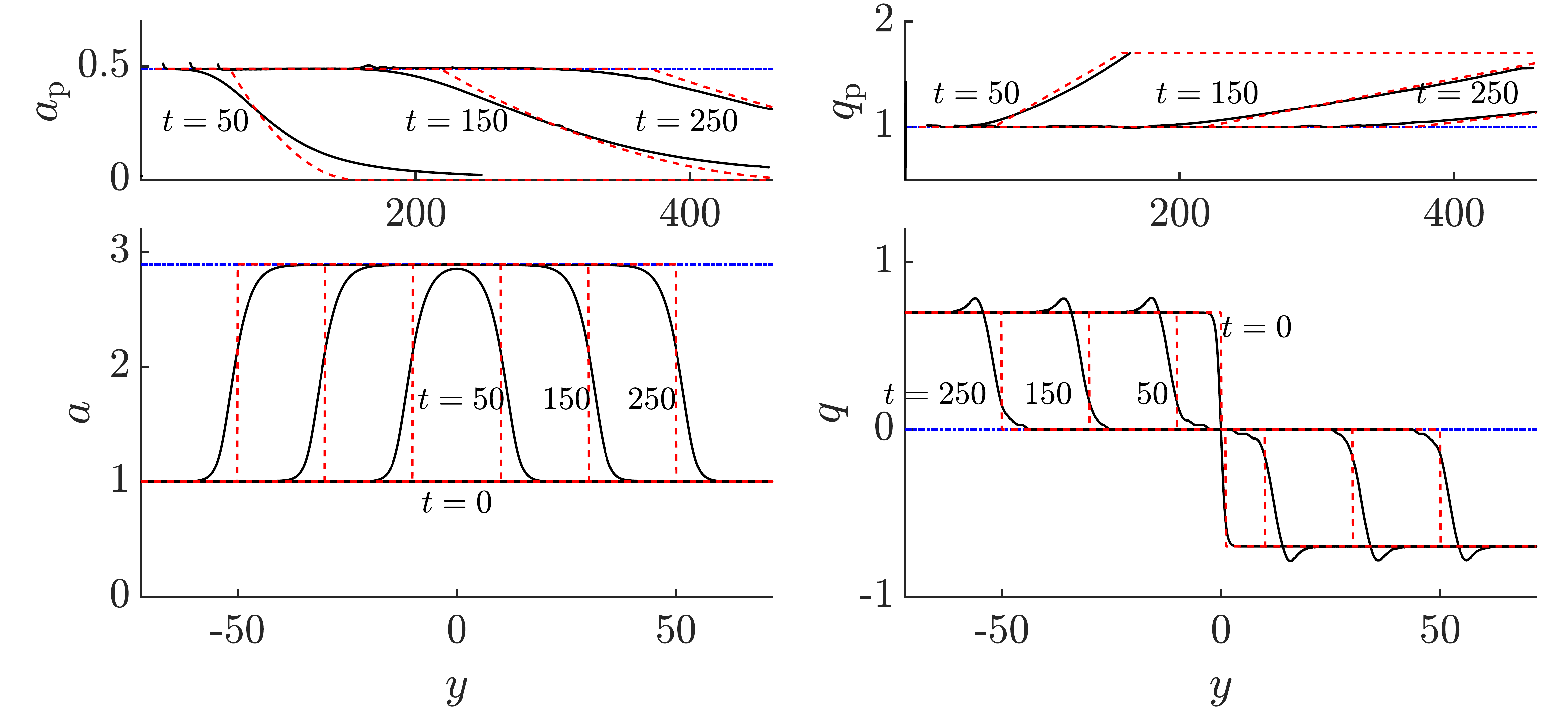}
    \caption{Comparison between analytical solution (dashed) with numerical solution (solid) for v-shape initial conditions from figure~\ref{fig:v_shape_stem} for $a$ (left) and $q$ (right) on $t \in \{0,50,150,250\}$. The top panel shows the evolution of the upper left leg as modelled by \eqref{eq:38}, and the bottom panel shows the evolution of the original legs and the shock stem as described by \eqref{eq:v_shape_soln}. In all cases the evolution approaches the long-time prediction (dashed-dotted). The top plot includes a phase shift of $y_0 = 12$ in order to account for higher-order effects. }
    \label{fig:exact_vs_num_mach}
\end{figure}

We can determine the parameters of the stem $(a_{\rm i},q_{\rm i})$ by looking for a state that shares one phase variable (Riemann invariant) with the top and the bottom, solving the equations
\[ -q_0-\sqrt{a_0}=q_{\rm i}-\sqrt{a_{\rm i}},\qquad q_0+\sqrt{a_0}=q_{\rm i}+\sqrt{a_{\rm i}}.\]
We obtain that $q_{\rm i}=0$ and $\sqrt{a_{\rm i}}=\sqrt{a_0}+q_0$. From the requirement that the physical variables of a Y soliton conserve phase parameters (see section~\ref{sec:unique}), one can determine that the new upper left leg has parameters $(a_{\rm n},q_{\rm n})=(q_0^2,\sqrt{a_0})$, with symmetrical results for the bottom new leg. Using the jump conditions calculated throughout this paper, we also determine that the velocity of the top discontinuity is $s_y=\frac{2}{3}(\sqrt{a_0}-q_0)$. Consequently, leaving aside the growth of the extra leg on the top and bottom, the evolution of the initial half solitons and the new stem becomes
\begin{subequations}
    \label{eq:v_shape_soln}
    \begin{equation}
    \label{eq:47}
    a(y,t)=\begin{cases}
        a_0 & |y| > \frac{2}{3}(\sqrt{a_0}-q_0)t \\
        (q_0+\sqrt{a_0})^2 & |y| < \frac{2}{3}(\sqrt{a_0}-q_0)t
        \end{cases} ,
    \end{equation}
    \begin{equation}
    \label{eq:48}
    q(y,t)=
    \begin{cases} 
        -\mathrm{sgn}(y) q_0 & |y| > \frac{2}{3}(\sqrt{a_0}-q_0)t \\
        0 & |y| < \frac{2}{3}(\sqrt{a_0}-q_0)t
    \end{cases}.
    \end{equation}\end{subequations}
Comparing the modulation theory analytical solution \eqref{eq:v_shape_soln} with parameter values extracted from numerical simulation of the full KP equation \eqref{eq:kp} with $\epsilon=1$, shown in figure~\ref{fig:v_shape_stem}, we see that \eqref{eq:v_shape_soln} shows excellent agreement with large $t$ numerical results in the bottom panels of figure~\ref{fig:exact_vs_num_mach}.
\par Modulation theory can also analytically describe the growth of the new leg. We focus on the upper left leg, with symmetrical results for the bottom left leg. Since this leg must appear immediately to satisfy the jump conditions, we can model its appearance as a partial soliton, a problem previously solved \cite{neu_singular_2015,ryskamp_2020,ryskamp2020_mf}. The partial soliton problem is a Riemann problem in the modulation variables
\begin{equation}
    \label{eq:partial_soli}
       a_{\rm p}(y,0)=\begin{cases} a_{\rm n} & y<0 \\ 0 & y>0 \end{cases}, \qquad
    q_{\rm p}(y,0)=\begin{cases} q_{\rm n} & y<0 \\ q_* & y>0 \end{cases},
\end{equation}
where $q_*$ can be determined by conservation of a Riemann invariant. The solution to \eqref{eq:partial_soli} is a simple wave with $R_+$ constant, and we can write down an explicit solution for all values above the expanding stem where $a(y,t)$ and $q(y,t)$ are multivalued, i.e. for $y>\frac{2}{3}(\sqrt{a_0}-q_0)t$,
\begin{subequations}
\label{eq:38}
\begin{align}
    \sqrt{a_{\rm p}(y,t)} &=
    \begin{cases}
      0 & \qquad \qquad \quad \, 2q_*t <  y \\ \, \frac{3}{8}\left(2q_*-\frac{y}{t}\right) &
      \left(2q_{\rm n}-\frac{2}{3}\sqrt{a_{\rm n}}\right)t < y < 2q_*t  \\ a_{\rm n} & \quad \frac{2}{3}(\sqrt{a_0}-q_0)t < y < \left(2q_{\rm n}-\frac{2}{3}\sqrt{a_{\rm n}}\right)t
    \end{cases}, \\
    q_{\rm p}(y,t) &= q_{\rm n}+\sqrt{a_{\rm n}} - \sqrt{a_{\rm p}(y,t)}.
\end{align}\end{subequations}
Here $q_*=R_+=q_{\rm n}+\sqrt{a_{\rm n}}$ and represents the limit of $q(y,t)$ for the $a=0$ edge of the simple wave. Note that the new leg is well-defined, since the bottom (soliton) edge of the simple wave travels with velocity $2q_{\rm n}-\frac{2}{3}\sqrt{a_{\rm n}}=2\sqrt{a_0}-\frac{2}{3}q_0>\frac{2}{3}(\sqrt{a_0}-q_0)$, so the upper left leg always grows faster than the discontinuity with the stem. The solution \eqref{eq:38} is shown to accurately capture the large $t$ numerics in the top panels of figure~\ref{fig:exact_vs_num_mach}. To our knowledge, the evolution \eqref{eq:38} has not been obtained in previous work for V-shape initial conditions \eqref{eq:ArrowheadIC}.

The above \emph{Mach reflection} behaviour occurs for $\sqrt{a_0} \ge q_0$. The maximum amplitude of the intermediate (stem) region is obtained when $q_0 = \sqrt{a_0}$, which gives $a_i=(2\sqrt{a_0})^2=4a_0$, precisely the maximum four-fold amplification first predicted by Miles \cite{Miles1977oblique}. When $q_0>\sqrt{a_0}$, the solution \eqref{eq:v_shape_soln} breaks down, since the stem no longer grows. We consider this case next. 
\subsection{Regular reflection}
\begin{figure}
    \centering
    \includegraphics[scale=1]{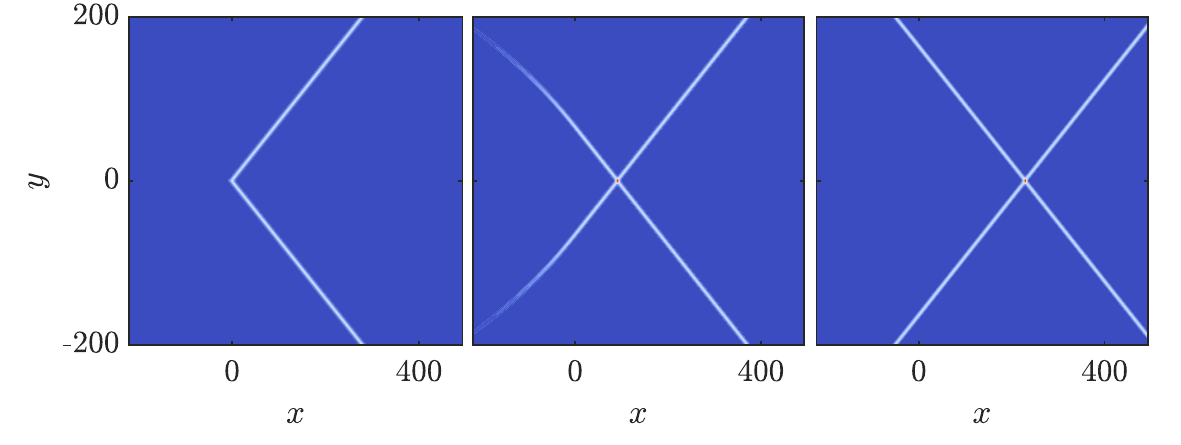}
    \caption{Numerical evolution of the KP equation \eqref{eq:kp} with $\epsilon=1$ for V-shape initial conditions $\eqref{eq:ArrowheadIC}$ projected onto \eqref{eq:39} on $t \in \{0,40,100\}$ for $a_0=1$ and $q_0=1.4$. The resonance condition is not met, so Mach reflection does not occur.}
    \label{fig:reg_ref}
\end{figure}
When $q_0>\sqrt{a_0}$ for \eqref{eq:ArrowheadIC}, we have \emph{regular reflection}. It is well known that in this regime an X-shaped solution arises where two solitons intersect with a small phase shift \cite{kodama_2009}, which is higher-order \cite{Miles1977oblique} and therefore not captured by modulation theory. A simulation of regular reflection is shown in figure~\ref{fig:reg_ref}. The growth of the new left legs of the X can be modelled identically as before in \eqref{eq:38}, with the only difference that $a_{\rm n}=a_0$ and $q_{\rm n}=-q_0$, and the lower bound of the new soliton is $y=0$. Consequently, $q_*=q_0+\sqrt{a_0}$. The partial soliton solution \eqref{eq:38} with the new parameters is favourably compared with numerical simulation in figure~\ref{fig:reg_ref_acc}. 
\par The resulting X-shaped soliton, often called the O-type (meaning ``ordinary'' \cite{kodama_book}) soliton, is a travelling wave exact solution of the KP equation \eqref{eq:kp} with velocity $(\overline{s_x},\overline{s_y})$. By the argument in section~\ref{sec:shock_soln}, this structure as a whole should satisfy the jump conditions as given in section~\ref{sec:shock_soln}. For $y>\overline{s_y}t$, we have two solitons with parameters $(a_0,\pm q_0)$ while for $y<\overline{s_y}t$, we also have two solitons with parameters $(a_0,\mp q_0)$. As a result, all $(n,m)=(2,2)$ modified Rankine-Hugoniot jump conditions are trivially satisfied with no information gained about $s_y$. However, we can still utilize jump conditions for $q$ from \eqref{eq:11} to determine the velocity $\overline{s_y}$. Taking any two legs with opposite signs for $q$, we obtain
\[
    \overline{s_y} = \frac{\llbracket a/3+q^2 \rrbracket}{\llbracket q \rrbracket} = \frac{a_0/3 + q_0^2-(a_0/3+q_0^2)}{q_0-(-q_0)}=0, 
\]
as expected from symmetry.

\begin{figure}
    \centering
    \includegraphics[scale=.25]{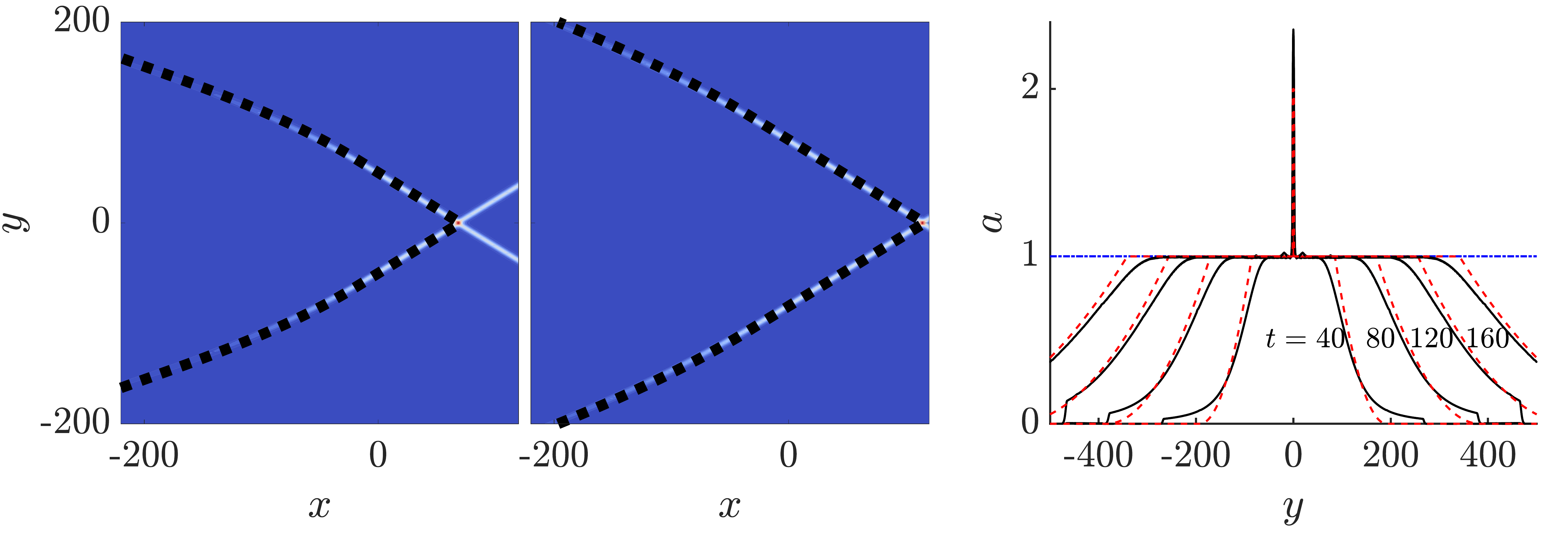}
    \caption{Accuracy of the modulation prediction for the simulation from figure~\ref{fig:reg_ref}. The left two panels show contour plots for $t = 30$ and $t=50$, with the predictions for $\theta$ from \eqref{eq:38} and \eqref{eq:39} overlayed (dashed). The right panel shows the accuracy of the amplitude prediction (dashed) compared to numerical simulation (solid) for $t \in \{40,80,120,160\}$. The long-time asymptotic prediction (dash-dotted) is also shown. The modulated prediction is shown to be accurate, although the peak amplitude is actually $\approx 2.3$, whereas superposition of modulation theory solutions predicts $a(0,t) = 2$. }
    \label{fig:reg_ref_acc}
\end{figure}

\section{Discussion and conclusion}
\label{sec:concl}
\par The main finding of this paper is that the KP resonant Y soliton can be modelled as a shock solution to the soliton modulation equations. Remarkably, the velocity of the Y soliton satisfies modified Rankine-Hugoniot jump conditions for an infinite family of conservation laws. The justification for our result is the fact that the modulation equations \eqref{eq:mod_syst} represent the soliton parameter evolution in the zero dispersion limit of the full KP equation \eqref{eq:kp}. In this limit, the rapid transition from the single, larger stem of the resonant soliton to the two smaller legs limits to a single point of discontinuity. This zero dispersion limit approach for defining admissible discontinuous weak solutions is analogous to the vanishing viscosity approach of admissible classical shocks in conservation law theory \cite{lefloch_shock_book}. Our work is the second known example of Whitham shocks \cite{sprenger_whitham_shocks}, discontinuities in the modulation variables that correspond to travelling wave solutions of the full equation. 

Since the Y soliton parameters are multivalued in the modulation equations, and the fully two-dimensional modulation system reduces to an equivalent one-dimensional system, we must introduce modified Rankine-Hugoniot conditions. Namely, the fluxes and densities of the two smaller legs are added when calculating the shock velocities. This is justified by the fact that the two smaller legs summed together satisfy integral conservation laws when one integrates over $x$. 

 Through Whitham modulation theory, we were able to analytically describe the evolution of V-shape initial conditions \eqref{eq:ArrowheadIC}. Unlike prior work, our approach also determines the dynamic growth of the reflected legs as partial solitons. These predictions were quantitatively verified to high accuracy in the large $t$ regime by numerical simulation for the KP equation \eqref{eq:kp}. In addition, we identified the transition between Mach reflection and regular reflection for a soliton incident upon an inward, compressive corner, with Mach reflection occurring when $|q_0|<\sqrt{a_0}$.
 
 Mach reflection is analogous to the phenomenon of Mach expansion, identified in \cite{ryskamp_2020}, where a soliton incident on an \emph{outward} corner creates a new, straight \emph{lower-amplitude} stem whenever $|q_0|<\sqrt{a_0}$. Within the Mach reflection and expansion regimes, a new line soliton is formed perpendicular to the wall. Outside of these regimes, the interaction of a soliton with an outward or inward corner is very different, with the initial partial solitons evolving independently from each other.

 The results in this paper are (to our knowledge) the first application of modulation theory to a fully two-dimensional evolution, thereby opening new doors for future work studying soliton resonance in other two-dimensional equations. Modulation theory does not depend on integrability or, relatedly, the existence of a large family of multi-soliton solutions. Consequently, this work can be generalized to any system with a known line soliton solution. For example, the two-dimensional Benjamin-Ono (2DBO) equation, which models internal water waves \cite{ablowitz_segur_1980}, has numerically-observed resonance-like behaviour \cite{numerics_2001} and line soliton solutions for which modulation equations have been calculated \cite{Ablowitz_whitham_2dbo}. Few analytical tools and no known exact multi-soliton solutions exist for the 2DBO equation, but the generalization of our approach seems like a promising avenue for research in this area.

\section*{Acknowledgments}
Authors thank Jonathan Hauenstein for helpful discussions concerning polynomial systems. The work of MAH and SR was supported by NSF grant DMS-1816934. The work of GB was
supported by NSF grant DMS-2009487.

\appendix
\section{Reducibility of multidimensional modulations}
\label{sec:appendix}
In this appendix we prove Theorem~\ref{def:thm}. 
Namely, we will show that when $a$ and $q$ describe a modulated KP soliton \eqref{eq:soli}, the system \eqref{eq:full_mod_syst} can be reduced to a (1+1)-dimensional system, assuming differentiable modulations.
\begin{proof} 
The full modulation system \eqref{eq:full_mod_syst} is equivalent to \eqref{eq:mod_syst} under the assumption that $a_x \equiv q_x \equiv 0$, which followed from the ansatz \eqref{eq:x_indep_soli}. This assumption can be relaxed by using a more general line soliton ansatz as
\begin{equation}
    \label{eq:gen_mod_soli}
u_0(\theta,x,y,t; \epsilon) =  a(x,y,t)\sech^2\left(\frac{1}{\psi}\sqrt{\frac{a(x,y,t)}{12}}\frac{\theta}{\epsilon}\right), 
\end{equation}
where $\psi = \psi(a,q)$ is nonzero, real-valued, and differentiable. If the soliton is unmodulated (i.e. $a(x,y,t)\equiv a$ and $q(x,y,t)\equiv q$), \eqref{eq:gen_mod_soli} must be consistent with \eqref{eq:soli}. Since the argument of $\sech^2$ in \eqref{eq:gen_mod_soli} has been divided by $\psi$, it must also be multiplied by $\psi$. We incorporate $\psi$ into the derivatives of the fast phase $\theta/\epsilon$ so that
\begin{equation}
    \label{eq:gen_theta_derivs}
    \theta_x = \psi(a,q), \qquad \theta_y = q \psi(a,q), \qquad \theta_t = -\psi(a,q)\left(\frac{a}{3}+q^2\right),
\end{equation}
where $a=a(x,y,t)$ and $q=q(x,y,t)$ are the modulation parameters. For the ansatz in this work \eqref{eq:x_indep_soli}, we selected $\psi = 1$, while for a previous work focusing on $y$-independent modulations we selected $\psi = 1/q$ \cite{ryskamp2020_mf}. The  compatibility conditions $\theta_{xy}=\theta_{yx}$ and $\theta_{xt}=\theta_{tx}$ for \eqref{eq:gen_theta_derivs} yield, respectively
\begin{subequations}
\begin{align}
    \label{eq:155}
    \psi_a a_y + \psi_q q_y &= q \psi_a a_x + (q \psi_q + \psi) q_x, \\
    \label{eq:156}
    \psi_a a_t + \psi_q q_t &= -\left(c \psi_a + \frac{\psi}{3} \right)a_x - \left(c \psi_q + 2q\psi \right)q_x. 
\end{align}\end{subequations}
We require consistency with the full modulation system \eqref{eq:full_mod_syst}, so we insert the equations for $a_t$ and $q_t$ from \eqref{eq:full_mod_syst} into \eqref{eq:156} to obtain
an equation in which $(a,q)$ only depend on the spatial variables $(x,y)$. Using simplifying calculations together with \eqref{eq:155}, we obtain the system
\begin{equation}
\label{eq:6p8}
    \begin{bmatrix} 
    \psi_a & \psi_q \\ \psi_q & 4a \psi_a
    \end{bmatrix} 
    \begin{bmatrix} a_y \\ q_y 
    \end{bmatrix} =    
    q\begin{bmatrix} 
     \psi_a &  \psi_q \\  \psi_q & 4 a  \psi_a
    \end{bmatrix} 
    \begin{bmatrix} a_x \\ q_x 
    \end{bmatrix} +     \begin{bmatrix} 
     0 &  \psi\\  \psi & 0
    \end{bmatrix} \begin{bmatrix} a_x \\ q_x 
    \end{bmatrix} \, .
\end{equation}
We can always solve for either $(a_x,q_x)$ or $(a_y,q_y)$ in equation \eqref{eq:6p8}, since for any $\psi$ either the left hand side or the right hand side matrix in \eqref{eq:6p8} is invertible. We show this by assuming both determinants are zero and generating a contradiction. Let us rewrite \eqref{eq:6p8} as $A \bm{v}_y = B \bm{v}_x$, where $\bm{v}=[a\; q]^T$ and $A,B$ are the appropriate matrices from \eqref{eq:6p8}. We assume that 
\begin{subequations}
\label{eq:assump}
\begin{align}
    \label{eq:assump1}
    \det{A} &= 4a \psi_a^2- \psi_q^2 = 0, \\
     \label{eq:assump2}
     \det{B} &= 4a \psi_a^2- \psi_q^2 -\psi^2-2q\psi_q \psi = -\psi^2-2q\psi_q \psi = 0.
\end{align}\end{subequations}
The condition \eqref{eq:assump2} for nonzero $\psi$ implies that $\psi+2q\psi_q=0$. This ordinary differential equation is solved as $\psi=R(a)/\sqrt{q}$ for any $R(a)$. But inserting this into \eqref{eq:assump1}, we obtain an equation for $R$
that is satisfied when 
$ 4 q \sqrt{a} R'(a) = \pm R(a)$.
 This contradicts the fact that $R=R(a)$ is only a function of $a$, so we cannot find a solution for $\psi$ that satisfies \eqref{eq:assump} and sets both determinants to zero. Therefore, for any $\psi$ either $A$ or $B$ must have a nonzero determinant and therefore be invertible. 
 \end{proof}
 \par Consequently, in \eqref{eq:6p8} we can solve for either $(a_x,q_x)$ or $(a_y,q_y)$. We can then locally substitute whichever variables were solved for into the full modulation system \eqref{eq:full_mod_syst}, reducing \eqref{eq:full_mod_syst} to a (1+1)-dimensional system.

\clearpage
\bibliography{refs} 

\end{document}